\documentclass[a4paper,UKenglish,cleveref, autoref, thm-restate]{lipics-v2021}
\nolinenumbers
% \pdfoutput=1 %uncomment to ensure pdflatex processing (mandatatory e.g. to submit to arXiv)
\hideLIPIcs  %uncomment to remove references to LIPIcs series (logo, DOI, ...), e.g. when preparing a pre-final version to be uploaded to arXiv or another public repository
\usepackage{graphicx} % Required for inserting images
\usepackage{amsmath,amssymb,amsthm}
\usepackage[T1]{fontenc}
\usepackage{color,url}
\usepackage{hyperref}
\usepackage{xspace}
\usepackage{todonotes}
\usepackage{nicematrix}
\usepackage{lineno}
\usepackage{circuitikz}

\usepackage[noadjust,compress]{cite}

\usepackage[noend]{algpseudocode}
\usepackage{algorithm}

\title{A Polynomial Time Algorithm for Steiner Tree when Terminals Avoid a Rooted $\texorpdfstring{K_4}{K4}$-Minor}
\titlerunning{A Polynomial Time Algorithm for Steiner Tree when Terminals Avoid a $\texorpdfstring{K_4}{K4}$-Minor}
\author{Carla Groenland}{Delft Institute of Applied Mathematics}{C.E.Groenland@tudelft.nl}{https://orcid.org/0000-0002-9878-8750}{}
\author{Jesper Nederlof}{Utrecht University, Department of Information and Computing Sciences}{j.nederlof@uu.nl}{https://orcid.org/0000-0003-1848-0076}{}
\author{Tomohiro Koana}{Utrecht University, Department of Information and Computing Sciences}{tomohiro.koana@gmail.com}{https://orcid.org/0000-0002-8684-0611}{}

\acknowledgements{The work of both JN and TK has been supported by the project CRACKNP that has received funding from the European Research Council (ERC) under the European Union's Horizon 2020 research and innovation programme (grant agreement No 853234).}

\authorrunning{Groenland et al.}

\ccsdesc[500]{Mathematics of computing~Graph algorithms}

\keywords{Steiner tree, rooted minor} %TODO mandatory; please add comma-separated list of keywords

\Copyright{Carla Groenland, Jesper Nederlof, and Tomohiro Koana} %TODO mandatory, please use full first names. LIPIcs license is "CC-BY";  http://creativecommons.org/licenses/by/3.0/

% \relatedversion{The full version of the paper is available on arXiv~\cite{}.}

% \newcommand{\carla}[1]{\textcolor{blue}{[\textbf{CG:} #1]}}
% \newcommand{\tomo}[1]{\textcolor{blue}{[\textbf{TK:} #1]}}

\EventEditors{\'{E}douard Bonnet and Pawe\l{} Rz\k{a}\.{z}ewski}
\EventNoEds{2}
\EventLongTitle{19th International Symposium on Parameterized and Exact Computation (IPEC 2024)}
\EventShortTitle{IPEC 2024}
\EventAcronym{IPEC}
\EventYear{2024}
\EventDate{September 4--6, 2024}
\EventLocation{Royal Holloway, University of London, Egham, United Kingdom}
\EventLogo{}
\SeriesVolume{321}
\ArticleNo{12}

\DeclareMathOperator{\dist}{dist}

\date{\today}

\begin{document}

\maketitle
\begin{abstract}
    We study a special case of the Steiner Tree problem in which the input graph does not have a minor model of a complete graph on $4$ vertices for which all branch sets contain a terminal. We show that this problem can be solved in $O(n^4)$ time, where $n$ denotes the number of vertices in the input graph.
    This generalizes a seminal paper by Erickson et al.~[Math. Oper. Res., 1987] that solves Steiner tree on planar graphs with all terminals on one face in polynomial time.  
\end{abstract}

\section{Introduction}
Planar graphs are well-studied algorithmically. For example, starting with the work of Baker~\cite{DBLP:journals/jacm/Baker94}, many efficient approximation schemes for NP-complete problems on planar graphs have been designed. Within parameterized complexity, widely applicable tools such as bi-dimensionality~\cite{DBLP:journals/jacm/DemaineFHT05} helped to grasp a firm understanding of the ``square-root phenomenon'': Many problems can be solved in $2^{O(\sqrt{n})}$ time, or even in $2^{O(\sqrt{k})} \cdot n^{O(1)}$ time, where $n$ denotes the number of vertices of the input graph and $k$ denotes the size of the sought solution.
Actually most of these algorithms are also shown to extend to superclasses\footnote{Wagner's theorem states that a graph is planar unless it contains a complete graph on $5$ vertices ($K_5$) or complete bipartite graph with two blocks of $3$ vertices ($K_{3,3}$) as a minor.} of planar graphs, such as bounded genus or even minor free graphs (as for example showed in~\cite{DBLP:journals/jacm/DemaineFHT05}).

One problem that is very well-studied on planar graphs is \textsc{Steiner Tree}.
In the \textsc{Steiner Tree} problem we are given a weighted graph $G=(V,E)$ on $n$ vertices with edge weights $w_e \in \mathbb{R}_{\geq 0}$ for $e \in E$ and a set of vertices $T = \{ t_1,\dots,t_k \} \subseteq V$ called \emph{terminals}, and we are tasked with finding an edge set $S$ minimizing $w(S)=\sum_{e\in S}w(e)$ such that any pair of terminals $t, t'$ are connected in the subgraph $(V, S)$.
For example, an efficient approximation scheme was given in~\cite{DBLP:journals/talg/BorradaileKM09}, and a lower bound excluding $2^{o(k)}$ time algorithms on planar graphs under the Exponential Time Hypothesis was given in~\cite{DBLP:conf/focs/MarxPP18}. Interestingly, the latter result shows that planarity alone is not too helpful to solve \textsc{Steiner Tree} quickly, because it implies that the classic  $3^{k}n^{O(1)}$ dynamic programming algorithm for general \textsc{Steiner Tree}~\cite{DBLP:journals/networks/DreyfusW71} cannot be significantly improved.

A seminal paper by Erickson et al.~\cite{EricksonMV87}  shows that \textsc{Steiner Tree} on planar graphs is in $P$ if all terminals lie on one face in a planar embedding of $G$.
The study of the setting in which terminals lie in a few number of faces dates back all the way to the work of Ford and Fulkerson~\cite{ford1956maximal}, and has been the subject of many classic works (such as the Okamura-Seymour Theorem~\cite[Chapter 74]{schrijver2003combinatorial} or~\cite{DBLP:journals/siamcomp/MatsumotoNS85}). 
The algorithm~\cite{EricksonMV87} is used as subroutine in several other papers such as the aforementioned approximation scheme~\cite{DBLP:journals/talg/BorradaileKM09}, but also preprocessing algorithms~\cite{DBLP:journals/talg/PilipczukPSL18}.

Often the algorithms that exploit planarity or minor-freeness need to combine \emph{graph theoretic} techniques (such as a grid minor theorem in the case of bi-dimensionality) with \emph{algorithmic} perspective (such as divide and conquer or dynamic programming over tree decompositions).
For the \textsc{Steiner Tree} problem, and especially the algorithm from~\cite{EricksonMV87}, the graph theoretic techiques employed are intrinsically \emph{geometric}.
And indeed, many extensions of the algorithm, such as the one in which the terminals can be covered the $k$ outer faces of the graph~\cite{DBLP:journals/anor/BernB91}, or the setting in which the terminals can be covered by a ``path-convex region'' studied in \cite{DBLP:journals/siamcomp/Provan88}, all have a strong geometric flavor.

With this in mind, it is natural to ask whether there is an extension of the algorithm of~\cite{EricksonMV87} to minor free graphs.

\paragraph*{Rooted Minors.} We study a setting with \emph{rooted} minors. A graph $H$ is a \emph{minor} of a graph $G$ if it can be constructed from $G$ by removing vertices or edges, or contracting edges. In a more general setting, $G$ has a set of \emph{roots} $R \subseteq V(G)$ and there is a mapping $\pi$ that prescribes for each $v \in V(H)$ the set of roots $\pi(v) \subseteq R$. Then \emph{a rooted minor} is a minor that contracts $r$ into $v$, for every $v \in V(H)$ and $r \in R$ such that $r \in \pi(v)$.
Rooted minors play a central in Robertson and Seymour’s graph minor theory, and directly generalize the \textsc{Disjoint Paths} problem. Recently it was shown~\cite{korhonen2024minor} that rooted minors can be detected in $(m+n)^{1+o(1)}$ time, for fixed $|H|$ and $|R|$, improving over the quadratic time algorithm from~\cite{DBLP:journals/jct/KawarabayashiKR12}. 
A number of recent papers have studied rooted minors in their own right~\cite{Wollan08,Kawarabayashi04,JorgensenKawarabayashi07,WoodLinusson,hodor2024quickly, DBLP:journals/jgt/BohmeHKMS24}.
In particular, Fabila-Monroy and Wood \cite{Fabila-MonroyAndWood} gave a characterization of when a graph contains a $K_4$-minor rooted at four given vertices. Recently, links between rooted minors and `rooted' variants of treewidth, pathwidth and treedepth were given in~\cite{hodor2024quickly}.  

Motivated by the missing extension of the algorithm of~\cite{EricksonMV87} to a (non-geometric) setting of excluded minors, we propose studying the complexity of instances without minors rooted at the terminals.

\paragraph*{Our Result}
We will be interested in the closely related (but slightly different) setting of \textit{$R$-rooted $K_4$-minor}: 
In a graph $G=(V,E)$ with vertices $R\subseteq V$ (referred to as \emph{roots}), an \textit{$R$-rooted $K_4$-minor} is a collection of disjoint vertex sets $S_1,S_2,S_3,S_4\subseteq V$ such that $G[S_i]$ is connected and $S_i\cap R\neq \emptyset$ for all $i\in \{1,\dots,4\}$, and such that there is an edge from $S_i$ to $S_j$ for each $i\neq j$. 

\begin{theorem}\label{thm:st}
\textsc{Steiner Tree} without $K_4$ minor rooted at terminals can be solved in $O(n^4)$ time.
\end{theorem}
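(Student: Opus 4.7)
The plan is to show that absence of a $K_4$-minor rooted at $T$ yields a recursive decomposition by small vertex separators that split $T$ into smaller pieces, and then use a dynamic program for \textsc{Steiner Tree} on this decomposition. The analogy with Erickson et al.\ is that the cyclic ordering of terminals on a face is replaced by a (tree-like) separator decomposition inherent to the rooted $K_4$-minor-free structure.

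\textbf{Structural lemma.} I would first prove: if $|T| \geq 4$ and $G$ has no $K_4$-minor rooted at $T$, then there is a vertex set $X \subseteq V(G)$ with $|X| \leq 2$ whose removal partitions $T$ into subsets of size $< |T|$, and such that for each resulting piece $G_i$, the subgraph on $V(G_i) \cup X$ with terminal set $(T \cap V(G_i)) \cup X$ remains rooted-$K_4$-minor-free. The justification is a Menger-style argument: if no such small separator existed, high connectivity between terminals would allow constructing four vertex-disjoint connected branch sets that pairwise touch, each anchored at a terminal, contradicting the hypothesis. The characterization of rooted $K_4$-minors by Fabila-Monroy and Wood should provide a clean combinatorial handle here. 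The base case $|T| \leq 3$ is automatically rooted-$K_4$-minor-free and can be solved by classical algorithms such as Dreyfus--Wagner.

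\textbf{Algorithm.} Given a separator $X$, I recursively solve \textsc{Steiner Tree} on each piece, treating $X$ as additional terminals. For each piece, I tabulate---indexed by a constantly many \emph{boundary connectivity patterns} on $X$ specifying the connected components of the partial solution restricted to $X$---the minimum cost of a partial Steiner tree realising that pattern. Combining the children of a separator node amounts to enumerating compatible tuples of patterns and summing their costs, which is $O(1)$ work per combination. Correctness follows from the observation that any optimal Steiner tree in $G$, when intersected with a piece, realises one of these patterns, so the DP correctly partitions the search space.

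\textbf{Running time and main obstacle.} The recursion tree has $O(n)$ nodes (each call strictly reduces the number of terminals in each of its subpieces), and I expect $O(n^3)$ work per node---for recomputing the relevant shortest-path structure on the piece and combining boundary patterns---yielding $O(n^4)$ overall. The main obstacle is the structural lemma itself, specifically preserving rooted-$K_4$-minor-freeness in each piece after promoting $X$ to terminals: a branch set in a hypothetical rooted $K_4$-minor of a piece might be anchored at an $X$-vertex with no original terminal, so its extension through the other pieces need not give a rooted $K_4$-minor at the original $T$. Ruling this out requires choosing $X$ so that the $X$-vertices are effectively ``interchangeable'' with terminals for the purpose of branch-set construction---for instance, placing them on short paths between terminals so that any branch set in a piece that reaches $X$ can be extended to a branch set reaching a terminal in $G$.
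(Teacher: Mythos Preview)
Your structural lemma is false as stated, and this is the genuine gap. Take the wheel $W_n$ (a cycle $C_n$ together with a hub adjacent to all cycle vertices) with the $n$ cycle vertices declared as terminals. This graph is $3$-connected, planar with all terminals on one face, and hence has no $K_4$-minor rooted at $T$. But removing any set $X$ with $|X|\le 2$ leaves the graph connected, so there is a single ``piece'' containing all of $T\setminus X$; adding $X$ back gives the whole instance and the recursion makes no progress. More generally, any $3$-connected rooted-$K_4$-minor-free instance (and there are many) defeats a separator-of-size-$\le 2$ scheme. The obstacle you flagged (preserving rooted-$K_4$-minor-freeness in pieces) is secondary; the primary one is that the separator need not exist.

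What the paper actually does is close to your proposal \emph{until} the graph becomes $3$-connected. It recurses on $1$-cuts and $2$-cuts, replacing the small side of a $2$-cut by a ``virtual edge'' whose four costs encode the four ways the global solution can interact with that side; this is essentially your boundary-pattern idea, and the preservation of rooted-$K_4$-minor-freeness is argued exactly along the lines you sketch. The crucial missing ingredient is the $3$-connected base case: there the paper proves a different structural lemma --- in a $3$-connected graph with no rooted $K_4$-minor (with the virtual edges subdivided and their midpoints treated as roots), there is a single cycle passing through \emph{all} roots. This cycle plays the role of the outer face in Erickson et al., and the algorithm then runs an interval dynamic program over contiguous arcs of roots on this cycle, with extra bookkeeping for the virtual-edge states. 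The $O(n^4)$ bound comes from $O(n^3)$ DP entries times $O(n)$ per entry, together with a careful recursion analysis for the $2$-cut phase. In short: your decomposition handles only the non-$3$-connected layers; the heart of the argument is the cycle-through-all-roots lemma and the interval DP on top of it.
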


This generalizes the result from~\cite{EricksonMV87}: It is easily seen that a planar graph has no $K_4$-minor rooted at four vertices on the same face. On the other hand, it is easy to come up with example instances that have no $K_4$-minor and are not planar (see Figure~\ref{fig:instance} for such two such examples).

\begin{figure}
\centering
\begin{tikzpicture}[scale=2]
    \begin{scope}
    \node (a) at (0, 2) [circle, fill, red, inner sep=2pt] {};
    \node (b) at (2, 2) [circle, fill, red, inner sep=2pt] {};
    \node (c) at (2, 0) [circle, fill, red, inner sep=2pt] {};
    \node (d) at (0, 0) [circle, fill, red, inner sep=2pt] {};
    \node (x) at (1, 1) [circle, fill, inner sep=2pt] {};

    \node (ab1) at (0.8, 1.5) [circle, fill, inner sep=2pt] {};
    \node (ab2) at (1.2, 1.5) [circle, fill, inner sep=2pt] {};
    \node (bc1) at (1.5, 1.2) [circle, fill, inner sep=2pt] {};
    \node (bc2) at (1.5, 0.8) [circle, fill, inner sep=2pt] {};
    \node (cd1) at (0.8, 0.5) [circle, fill, inner sep=2pt] {};
    \node (cd2) at (1.2, 0.5) [circle, fill, inner sep=2pt] {};
    \node (da1) at (0.5, 0.8) [circle, fill, inner sep=2pt] {};
    \node (da2) at (0.5, 1.2) [circle, fill, inner sep=2pt] {};

    \draw (a) -- (b) -- (c) -- (d) -- (a);
    \draw (a) -- (x);
    \draw (b) -- (x);
    \draw (c) -- (x);
    \draw (d) -- (x);

    \draw (a) -- (ab1) -- (b); \draw (a) -- (ab2) -- (b); \draw (ab1) -- (ab2);
    \draw (b) -- (bc1) -- (c); \draw (b) -- (bc2) -- (c); \draw (bc1) -- (bc2);
    \draw (c) -- (cd1) -- (d); \draw (c) -- (cd2) -- (d); \draw (cd1) -- (cd2);
    \draw (d) -- (da1) -- (a); \draw (d) -- (da2) -- (a); \draw (da1) -- (da2);

    \foreach \name in {ab1, ab2, bc1, bc2, cd1, cd2, da1, da2} {
        \draw (x) -- (\name);
    }
    \end{scope}
    \begin{scope}[shift={(-3,0.13)}]
        \node (a) at (-1, 0) [circle, fill, inner sep=2pt, red] {};
        \node (b) at (0, 1.73) [circle, fill, inner sep=2pt, red] {};
        \node (c) at (1, 0) [circle, fill, inner sep=2pt, red] {};

        \node (x1) at (-0.2, 0.58) [circle, fill, inner sep=2pt] {};
        \node (x2) at (0.2, 0.58) [circle, fill, inner sep=2pt] {};

        \draw (a) -- (b) -- (c) -- (a); 
        \draw (x1) -- (x2);
        \foreach \name in {a, b, c} {
            \draw (x1) -- (\name); \draw (x2) -- (\name);
        }
    \end{scope}
\end{tikzpicture}
\caption{Terminals are depicted in red. The left figure has $3$ terminals and hence no rooted $K_4$ minor, but it is a $K_5$ and hence not planar. The right figure has no rooted $K_4$ minor (since the middle vertex and two non-adjacent terminals separate the remaining terminals), and has many $K_5$ subgraphs and hence is not planar.}
\label{fig:instance}
\end{figure}
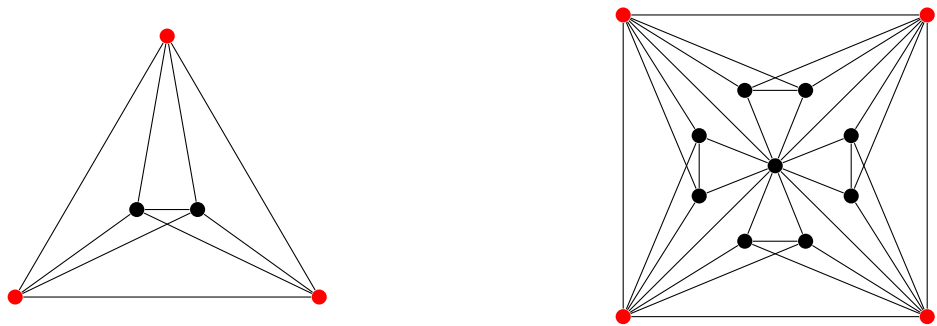

We hope that our result paves the road for additional polynomial time algorithms (solving non-geometrically) restricted versions of \textsc{Steiner Tree}. In particular, it would be interesting to see whether Theorem~\ref{thm:st} can be extended to a polynomial time algorithm for a richer set of rooted $\mathcal{F}$-minor free instances (e.g., instances that do not contain any member of $\mathcal{F}$ as rooted minor). Note however that such a strengthening with $\mathcal{F}=\{K_5\}$ would imply P=NP since \textsc{Steiner Tree} on planar graphs is NP-hard.

\paragraph*{Our Approach}
It is not too difficult to show from ideas given in \cite{Fabila-MonroyAndWood} that if there is no $R$-rooted $K_4$-minor in a 3-connected graph, then there is a cycle $C$ passing through all vertices in $R$. We need a slightly stronger variant of this result, where the graph is not quite 3-connected but the only 2-cuts allowed are those which isolate a single vertex that is a terminal (see Lemma \ref{lem:comb_lemma}). Our algorithm will recurse on 2-cuts until it can apply this lemma, and then performs dynamic programming in a similar fashion to the Dreyfus-Wagner algorithm~\cite{DBLP:journals/networks/DreyfusW71} restricted to subsets of terminals that form a contiguous segment of the cycle $C$ (in a fashion that is analogous to the algorithm for~\cite{EricksonMV87}, although the virtual edges considerably increase the technical difficulty of our proof). Since the number of such segments is at most quadratic in $n$, this gives a polynomial time algorithm.

We stress that the approach of~\cite{EricksonMV87} does not work directly: The approach of~\cite{EricksonMV87} crucially relies on the fact that if one decomposes an optimal Steiner Tree $S$ into two trees $S_1 \cup S_2$ where $S_1,S_2$ are the connected components of $S \setminus e$ for some $e \in S$, then the set $R_1$ (and $R_2$) of terminals covered by $S_1$ (and $S_2$) is a contiguously visited along the cycle enclosing the face that contains all terminals. In our setting, there is a priori no guarantee how the set $R_1$ will look, and therefore such a decomposition approach will not work. To overcome this, we also decompose the optimal tree along $2$-cuts via recursion and processing outcomes of recursive calls with virtual edges. Even though the virtual edges lead to some technical challenges, this allows us to make the idea of~\cite{EricksonMV87} work in $3$-connected graphs.

\section{Preliminaries}
For \( n \in \mathbb{N} \), we define \([n]\) as the set \(\{ 1, \dots, n \}\).
In this work, we assume that all graphs do not have self-loops, but we allow the graph to have parallel edges.
We will say that a graph is \emph{simple} if there is no parallel edge.
The vertex set and edge set of a graph $G$ are denoted by $V(G)$ and $E(G)$, respectively. 
% We denote the number of vertices by \( n := |V(G)| \).

For two vertices $u$ and $v$, let $\dist(u, v)$ be the length of a shortest path between $u$ and $v$.
Let \( X \subseteq V(G) \) be a subset of vertices.
We use \( G[X] \) to denote the subgraph induced by \( X \) and \( G - X := G[V(G) \setminus X] \) to represent the graph obtained by removing the vertices in \( X \).
For these notations, if \( X \) is a singleton \(\{ x \}\), we may write \( x \) instead of \(\{x\}\).
Moreover, $X$ is called a \emph{cut} if deleting $X$ increases the number of connected components. 
In particular, $x$ is called a \emph{cut vertex} if $\{ x \}$ is a cut.
For an edge set $F \subseteq E$, let $V(F)$ denote the set of vertices \emph{covered} by $F$, i.e., $V(F) = \{ u\mid u\in e \in F \}$.

% For a tree $\mathcal{T} =(V(\mathcal{T}), E(\mathcal{T}))$ and a vertex $v \in V(\mathcal{T})$, we say that a subtree $\mathcal{T}'$ of $\mathcal{T}$ is \emph{anchored at $v$} if it is obtained from $\mathcal{T}$ by  $\mathcal{T}' = \mathcal{T}[v \cup \bigcup_{i \in C} V(\mathcal{T}_i)]$, where
% $\mathcal{T}_1, \dots, \mathcal{T}_{\ell}$ are the connected components of $\mathcal{T} - v$ and $C \subseteq [\ell]$.
% The subtree $\mathcal{T}[v \cup \bigcup_{i \in [\ell] \setminus C} V(\mathcal{T}_i)]$ (which is also anchored at $v$) is denoted by $\mathcal{T} \setminus \mathcal{T}'$.
% In this work, we sometimes refer to a nonempty edge set $S \subseteq E$ as a tree if the graph $(V, S)$ has no cycle and has exactly one connected component of size greater than one.

Avoiding a (rooted) minor is preserved under deletion of vertices, deletion of edges and contraction of edges.
\begin{observation}
If $G$ does not contain an $R$-rooted $K_4$-minor and $H$ is a minor of $G$, then $H$ does not contain an $(V(H)\cap R)$-rooted $K_4$-minor.  
\end{observation}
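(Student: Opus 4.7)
My plan is to prove the contrapositive by a standard lifting argument: assuming $H$ contains a $(V(H)\cap R)$-rooted $K_4$-minor, I construct an $R$-rooted $K_4$-minor in $G$ by pulling it back through the minor model of $H$. Concretely, fix a minor model $\{B_v : v \in V(H)\}$ exhibiting $H$ as a minor of $G$, where the $B_v$ are pairwise disjoint, each $G[B_v]$ is connected, $v \in B_v$ under the usual convention that after each contraction the resulting vertex inherits the label (and root status) of one of its constituents, and for every edge $uv \in E(H)$ there is at least one edge of $G$ between $B_u$ and $B_v$. Let $S_1,S_2,S_3,S_4 \subseteq V(H)$ witness the $(V(H)\cap R)$-rooted $K_4$-minor in $H$, and define $S_i' := \bigcup_{v \in S_i} B_v$.

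The four verifications required are routine. Disjointness of the $S_i'$ follows from disjointness of the branch sets. For connectedness of each $G[S_i']$, I use that $H[S_i]$ is connected and that every edge $uv \in E(H[S_i])$ lifts to an edge of $G$ joining the connected subgraphs $G[B_u]$ and $G[B_v]$, so these connected subgraphs can be glued along a spanning subtree of $H[S_i]$. For the root condition, pick $v \in S_i \cap R \subseteq V(H) \cap R$; then $v \in B_v \subseteq S_i'$, so $v \in S_i' \cap R$. For adjacency, any edge in $H$ between $S_i$ and $S_j$ lifts by the definition of a minor model to an edge of $G$ between $S_i'$ and $S_j'$. Thus the $S_i'$ form an $R$-rooted $K_4$-minor in $G$, contradicting the hypothesis.

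There is no real obstacle here; the only mildly delicate point is the interpretation of $V(H) \cap R$, which relies on the convention that a root stays a root after any contraction it participates in. Once that is fixed, the proof is a direct verification from the definitions.
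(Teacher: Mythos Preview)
Your proposal is correct. The paper does not give a formal proof of this observation at all; it simply precedes the statement with the one-line remark that ``avoiding a (rooted) minor is preserved under deletion of vertices, deletion of edges and contraction of edges,'' treating the claim as immediate from the transitivity of the minor relation. Your explicit lifting argument via branch sets is exactly the natural way to unpack that remark, so there is no meaningful difference in approach --- you have just written out what the paper leaves implicit.
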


Our structural analysis will also crucially build on the following simple lemma.

\begin{lemma}[Lemma 7 in \cite{Fabila-MonroyAndWood}] \label{lemma:cycle-to-k4}
    Suppose that a graph $G$ has a cycle containing vertices $v_1, v_2, v_3, v_4$ in that order.
    Suppose moreover that there are two disjoint paths $P_1$ and $P_2$ where $P_1$ is from $v_1$ to $v_3$, and $P_2$ is from $v_2$ to $v_4$.
    Then $G$ has a $\{v_1,v_2,v_3,v_4\}$-rooted $K_4$-minor.
\end{lemma}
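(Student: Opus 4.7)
The goal is to exhibit four pairwise-disjoint connected vertex sets $S_1,\ldots,S_4$ with $v_i\in S_i$ and an edge between every pair. Let $A_{12},A_{23},A_{34},A_{41}$ be the four arcs of $C$ cut off by the roots in their cyclic order; these will supply the four ``cyclic'' adjacencies $(S_i,S_{i+1})$, while $P_1$ and $P_2$ supply the ``diagonal'' adjacencies $(S_1,S_3)$ and $(S_2,S_4)$.

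If the six curves $A_{12},A_{23},A_{34},A_{41},P_1,P_2$ are pairwise internally vertex-disjoint (and share vertices only at the four roots), the construction is immediate: on each curve between $v_i$ and $v_j$, pick one edge and place the $v_i$-side into $S_i$ and the $v_j$-side into $S_j$. Each $S_i$ is then a union of at most three sub-curves joined at $v_i$, hence connected; the $S_i$ are pairwise disjoint by hypothesis; and the six chosen edges supply the six required adjacencies. A degenerate curve consisting of a single edge $v_iv_j$ simply contributes the adjacency without adding any new vertex.

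The substance of the proof is therefore to arrange that $P_1,P_2$ are internally vertex-disjoint from $C$. I would argue this by induction on $|V(H)|$ where $H := C \cup P_1 \cup P_2$. First, suppress any non-root vertex of degree exactly $2$ in $H$: such a vertex is internal to exactly one of the six curves, and can be replaced by a single edge between its two neighbours while preserving both the cycle and the two disjoint paths, so a rooted $K_4$-minor in the smaller graph pulls back. After all such suppressions, every non-root vertex of $H$ has degree at least $3$ and therefore lies on at least two of $\{C,P_1,P_2\}$; since $V(P_1)\cap V(P_2)=\emptyset$, each non-root vertex is a \emph{crossing} of $C$ with exactly one of the two paths. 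If no crossings remain we are in the clean configuration above; otherwise I pick a crossing $w$ lying on $P_1$, say on arc $A_{12}$ (the other arcs are symmetric; note $w \notin \{v_2,v_4\}$), chosen so that $A_{12}[v_1,w]$ is internally disjoint from $V(P_1)\cup V(P_2)$, and replace $P_1[v_1,w]$ by $A_{12}[v_1,w]$ to obtain a new disjoint pair of paths with strictly fewer crossings, so that induction applies.

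The main obstacle is verifying that such a crossing $w$ can always be found: both $A_{12}$ and the symmetric alternative $A_{41}$ could contain crossings of $P_2$ that block the natural reroute. The resolution exploits the cyclic order of the roots on $C$: since $P_2$ is a $v_2$-to-$v_4$ path vertex-disjoint from $P_1$, a case analysis on the pattern of crossings along the four arcs shows that at least one of the two reroutes from $v_1$ is always valid, or else one can extract an even smaller instance of the same configuration (a shorter cycle through the roots together with two shorter disjoint paths) on which induction applies directly. This case analysis, rather than the construction of the $S_i$'s, is the technical heart of the argument.
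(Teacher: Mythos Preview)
The paper does not prove this lemma; it is quoted from Fabila-Monroy and Wood and used as a black box. So there is no ``paper's proof'' to compare against, and your plan must stand on its own.

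Your clean case is fine, and suppressing degree-$2$ non-root vertices is a sound first move. The problem is the inductive step. Consider
\[
C = v_1\,a\,v_2\,b\,v_3\,c\,v_4\,d\,v_1,\qquad P_1 = v_1\,a\,c\,v_3,\qquad P_2 = v_2\,b\,d\,v_4,
\]
so $H = C\cup P_1\cup P_2$ is the $8$-cycle with the two crossing chords $ac$ and $bd$. Every non-root vertex already has degree $3$, so suppression does nothing. Each root $v_i$ has its unique $C$-neighbour on one side equal to its unique $P$-neighbour (e.g.\ the $A_{12}$-neighbour of $v_1$ is $a$, which is also the $P_1$-neighbour of $v_1$), and its $C$-neighbour on the other side lies on the \emph{other} path. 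Hence all eight reroutes you describe leave $(C,P_1,P_2)$ unchanged. Moreover, every root has degree $2$ in $H$, so any cycle through all four roots must use all eight $C$-edges and therefore equal $C$; and since $H - V(P_2)$ and $H - V(P_1)$ are paths, $P_1$ and $P_2$ are the unique disjoint pair. So there is no ``smaller instance'' to extract either. Your dichotomy ``some reroute makes progress, or a smaller instance exists'' is simply false here, yet the rooted $K_4$-minor is immediate: take $S_i=\{v_i,x_i\}$ with $(x_1,x_2,x_3,x_4)=(a,b,c,d)$.

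The underlying issue is that you are trying to force the \emph{entire} paths $P_1,P_2$ to be internally disjoint from $C$, which is too strong. What suffices (and what the example above already exhibits) is a single \emph{chord} $Q_1\subseteq P_1$ from the $v_1$-arc of $C\setminus\{v_2,v_4\}$ to the $v_3$-arc, and a single chord $Q_2\subseteq P_2$ from the $v_2$-arc of $C\setminus\{v_1,v_3\}$ to the $v_4$-arc, each internally disjoint from $C$. Such chords always exist (walk along $P_i$ from one endpoint to the other and take the edge where you first leave one arc for the other), and one can then argue that a crossing pair of chords can be found, giving a $K_4$-subdivision whose branch sets can be grown along $C$ to swallow the $v_i$. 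That argument still needs care, but it replaces your unworkable induction.
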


\section{Finding a cycle passing through terminals and virtual edges}
\label{sec:comb_lemma}
In this section, we prove a structural lemma that our algorithm needs. This builds on ideas from~\cite{Fabila-MonroyAndWood}, but we require additional analysis due to the presence of `virtual edges'.
\begin{lemma}
\label{lem:comb_lemma}
    Let $G=(V,E)$ be a 3-connected graph and let $R\subseteq V$ be a set of roots with $|R|\geq 3$. 
    Let $E'\subseteq E$ and let $G'$ be obtained by subdividing each edge in $E'$ once. Let $S$ denote the set of subdivision vertices added with this operation. 
    
    If $G'$ has no $(R\cup S)$-rooted $K_4$-minor, then $G'$ contains a cycle containing all vertices in $R\cup S$.
    Furthermore, this cycle can be found in $n m^{1+o(1)}$ time for $n$ the number of vertices and $m$ the number of edges of $G'$.
\end{lemma}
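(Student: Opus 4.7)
The plan is an extremal argument adapting \cite{Fabila-MonroyAndWood} to handle the subdivision vertices. Write $N := R \cup S$ and let $C$ be a cycle in $G'$ that maximizes $|V(C) \cap N|$. By 3-connectivity of $G$ (and the classical fact that any three vertices of a 3-connected graph lie on a common cycle), a cycle through three roots exists in $G$, so $|V(C) \cap N| \geq 3$. Assume for contradiction some $v \in N \setminus V(C)$ exists.

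\textbf{Case $v \in R$.} Then $v \in V(G)$ and 3-connectivity of $G$ yields three internally vertex-disjoint paths $P_1, P_2, P_3$ in $G$ from $v$ to distinct vertices $a, b, c$ of $V_C := V(C) \cap V(G)$; these can be chosen internally disjoint from $V_C$, and their lifts to $G'$ remain internally disjoint from $V(C)$ (a lifted path can only hit a subdivision vertex on $C$ if the projected $G$-path uses an edge of the projection of $C$ to $G$, which is ruled out by internal disjointness from $V_C$). The paths divide $C$ into three arcs $A_1, A_2, A_3$. If some $A_i$ has no internal $N$-vertex, re-routing $C$ through $v$ via the two $P_j$'s flanking $A_i$ produces a cycle with strictly more $N$-vertices, contradicting the choice of $C$. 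Otherwise pick $t_i \in A_i \cap N$, and set $B_v := \{v\} \cup \bigcup_j \operatorname{int}(P_j)$ and $B_i := V(A_i) \setminus \{\text{one chosen endpoint}\}$; a routine check shows this yields a $\{v, t_1, t_2, t_3\}$-rooted $K_4$-minor in $G'$, contradicting the hypothesis.

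\textbf{Case $v \in S$.} Let $x, y$ be $v$'s two neighbors in $V(G)$. Since $v$ has degree 2 in $G'$, we cannot apply 3-connectivity at $v$ directly; we apply it instead at (say) $y$, obtaining three internally disjoint paths from $y$ to $V(C)$ in $G'$, one of which traverses the gadget $y$-$v$-$x$. The same dichotomy of re-routing versus $K_4$-minor applies, but now the branch set containing $v$ must absorb at least one of $\{x, y\}$ so that it is adjacent to three other branch sets (e.g., take $B_v := \{v, x\}$ or $B_v := \{v, x, y\}$ depending on the sub-case). The most delicate sub-case is when neither $x$ nor $y$ lies in $N$: here we use a third $N$-vertex $t_3 \in V(C)$ (which exists since $|V(C) \cap N| \geq 3$) and exploit 3-connectivity of $G$ to find an auxiliary path in $G' - \{x, y, v\}$ between the two components of $V(C) \setminus \{x, y\}$ (such a path exists since $G - \{x, y\}$ is connected); this path's interior is then merged into the branch sets to supply the remaining edge of the $K_4$-minor.

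The \textbf{main obstacle} is the sub-case analysis in Case $v \in S$: the degree-2 constraint prevents using $v$ as a singleton branch set, forcing careful absorption of the gadget endpoints $x, y$; the hardest sub-case additionally requires the auxiliary path between arc-interiors supplied by 3-connectivity of $G$. The $nm^{1+o(1)}$ running time bound follows because at most $O(n)$ extension rounds occur (each round strictly increases $|V(C) \cap N|$), and each round can be executed in $m^{1+o(1)}$ time using standard max-flow / disjoint-paths algorithms (or the rooted minor detection of \cite{korhonen2024minor} as a black box).
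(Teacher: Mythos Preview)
Your Case $v\in R$ is fine and matches the paper. The problem is your Case $v\in S$, where the one-shot extremal argument (maximise $|V(C)\cap N|$) does not go through as written.

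Take $v\in S$ with neighbours $x,y$, assume $y\notin V(C)$, and grant your (unproved) claim that one of the three Menger paths from $y$, say $P_1$, passes through $v$ and lands at $a\in V(C)$, while $P_2,P_3$ land at $b,d$. Your stated dichotomy is ``if some arc is $N$-free, reroute; otherwise build the $K_4$''. But if the \emph{only} $N$-free arc is the one between $b$ and $d$, rerouting must use $P_2$ and $P_3$ and therefore adds $y\notin N$, not $v$; the quantity $|V(C)\cap N|$ does not increase, so maximality gives no contradiction. You also cannot use the ``all arcs have a root'' $K_4$ in this sub-case, since one arc is empty. (The sub-case is in fact repairable: since the two non-empty arcs together contain $\ge 3$ roots, one of them contains two, and a more delicate branch-set partition using \emph{all three} spokes yields a $K_4$ rooted at $v$ and three of these cycle roots. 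But you did not do this, and it is not the dichotomy you stated.) Two further issues: the claim that one Menger path from $y$ can be forced through the gadget $y\text{-}v\text{-}x$ is not automatic and requires an argument; and your ``most delicate sub-case'' is phrased as ``neither $x$ nor $y$ lies in $N$'', yet the auxiliary-path construction you describe needs $x,y\in V(C)$, which is a different condition altogether.

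The paper avoids all of this by abandoning the extremal framework for the subdivision vertices and arguing iteratively in two phases. For a subdivision vertex $s$ of an edge $uv\in E'$, it first forces $u$ onto $C$: three Menger paths from $u$, and in the $K_4$ branch the centre branch set is $\{u,s\}\cup(\text{path interiors})$, which is connected (since $s$ is adjacent to $u$) and contains the root $s$ --- no need for any path to pass through $s$. In the reroute branch, the paper only needs $u$ (not $s$) to join $C$, so any $N$-free arc works and there is no problematic sub-case. After both $u$ and $v$ are on $C$, a separate argument (using a path supplied by $3$-connectivity of $G$ between the two arcs of $C$ delimited by $u,v$) shows one arc is $N$-free and can be replaced by the path through $s$. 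This two-phase structure is exactly what your single extremal potential cannot capture.
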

\begin{proof}
We first show that if there is no $R$-rooted $K_4$-minor in $G$ and $G$ is 3-connected, then we can find a cycle $C$ in $G$ that passes through all vertices of $R$.

We start with $C$ being any cycle.
%Let $C$ be a cycle which maximises the number of vertices of $R$ contained in it. 
%In a 3-connected graph, any three vertices lie on a common cycle \cite{Dirac}, so we may assume that $C$ contains at least three vertices from $R$.
Suppose there is a vertex $r\in R$ that is not on $C$. Since $G$ is 3-connected, by Menger's theorem (see~\cite[Theorem 3.3.1]{diestel2017graph}, applied with sets $C$ and $N(v)$), there exist three paths $P_1,P_2,P_3$ from $r$ to $C$, where the paths mutually only intersect in $r$ and each path only intersects $C$ in their other endpoint.
We can find these paths in $m^{1+o(1)}$ time with the max flow algorithm from e.g.~\cite{DBLP:conf/stoc/0028KLMG24}.
Let $v_1,v_2,v_3\in C$ denote these endpoints.
If there are three disjoint arcs on the cycle that each contains a root from $R$ and one of $\{v_1,v_2,v_3\}$, then $G$ contains a rooted $K_4$ minor (see Figure \ref{fig:threesegments}).

\begin{figure}[h!]
    \centering
\resizebox{1\textwidth}{!}{%
\begin{circuitikz}
\tikzstyle{every node}=[font=\LARGE]
\draw [gray, line width=30pt,  opacity =0.5, line cap=round]  (-5.25,19.75) .. controls (-4.75,21) and (-3.5,21.75) .. (-2.25,21.5);
\draw [gray, line width=30pt,  opacity =0.5, line cap=round] (-5.5,18.5) .. controls (-4,14.5) and (-0.75,16.25) .. (-0.25,17.75);
\draw [gray, line width=30pt,  opacity =0.5, line cap=round] (-1.25,21) .. controls (-0.25,20.25) and (-0.25,20) .. (0,19.25);
\draw [short] (8,21.25) .. controls (10.75,23) and (10.75,23.25) .. (13.5,21.25);
\draw [green, line width=10pt,  opacity =0.5, line cap=round] (8.5,21.7) .. controls (10.75,23) and (10.75,23.25) .. (13.5,21.25);
\draw [green, line width=10pt,  opacity =0.5, line cap=round] (10.7,19.3) -- (13.5,21.25);
\draw [short] (10.25,19) -- (13.5,21.25);
\draw [green, line width=10pt,  opacity =0.5, line cap=round] (9.5,16.3) .. controls (12.75,17.5) and (13.5,17.5) .. (13.5,21.25);
\draw [short] (8.75,16) .. controls (12.75,17.5) and (13.5,17.5) .. (13.5,21.25);
\draw  (7.5,18.5) circle (2.75cm);
\draw [ fill={rgb,255:red,0; green,0; blue,0} ] (8.75,16) circle (0.25cm); % bottom
\draw [yellow, line width=10pt,  opacity =0.5, line cap=round]  (7.25,15.8) .. controls (9,16) and (9.5,16.5) .. (10.2,18);
\draw [ color={rgb,255:red,0; green,0; blue,0} , fill={rgb,255:red,255; green,255; blue,255}] (9.75,17.75) rectangle (10.5,17); % bottom
\draw [ fill={rgb,255:red,0; green,0; blue,0} ] (8,21.25) circle (0.25cm); % top
\draw [blue, line width=10pt,  opacity =0.5, line cap=round]  (8.5,21.15) .. controls (3.75,21.25) and (4,17) .. (6.75,16);
\draw [ color={rgb,255:red,0; green,0; blue,0} , fill={rgb,255:red,255; green,255; blue,255}] (6.25,21.5) rectangle (7,20.75); % top
\draw [ fill={rgb,255:red,0; green,0; blue,0} ] (10.25,19) circle (0.25cm); % middle
\draw [red, line width=10pt,  opacity =0.5, line cap=round]  (10.25,18.8) .. controls (10,20) and (9.5,20.5) .. (9,20.8);
\draw [ color={rgb,255:red,0; green,0; blue,0} , fill={rgb,255:red,255; green,255; blue,255}] (9,20.75) rectangle (9.75,20); % middle
\node [font=\LARGE] at (10.75,18.5) {};
\node [font=\LARGE] at (10.75,18.5) {};
\node [font=\LARGE] at (8.5,20.75) {};
\draw [short] (-2.25,21.5) .. controls (0.5,23.25) and (0.5,23.5) .. (3.25,21.5);
\draw [short] (0,19.25) -- (3.25,21.5);
\draw [ fill={rgb,255:red,0; green,0; blue,0} ] (0,19.25) circle (0.25cm);
\draw [ fill={rgb,255:red,0; green,0; blue,0} ] (-2.25,21.5) circle (0.25cm);
\draw [short] (-1.5,16.25) .. controls (2.5,17.75) and (3.25,17.75) .. (3.25,21.5);
\node [font=\LARGE] at (0.5,18.75) {};
\node [font=\LARGE] at (0.5,18.75) {};
\node [font=\LARGE] at (-1.75,21) {};
\draw [ fill={rgb,255:red,0; green,0; blue,0} ] (-1.5,16.25) circle (0.25cm);
\draw  (-2.75,18.75) circle (2.75cm);
\draw [ color={rgb,255:red,1; green,0; blue,0} , fill={rgb,255:red,254; green,255; blue,255}] (-4,21.75) rectangle (-3.25,21);
\draw [ color={rgb,255:red,1; green,0; blue,0} , fill={rgb,255:red,254; green,255; blue,255}] (-0.5,18) rectangle (0.25,17.25);
\draw [ color={rgb,255:red,1; green,0; blue,0} , fill={rgb,255:red,254; green,255; blue,255}] (-1.25,21) rectangle (-0.5,20.25);
\draw [ color={rgb,255:red,1; green,0; blue,0} , fill={rgb,255:red,254; green,255; blue,255}] (13,21.5) rectangle (13.75,20.75);
\draw [ color={rgb,255:red,1; green,0; blue,0} , fill={rgb,255:red,254; green,255; blue,255}] (2.75,21.75) rectangle (3.5,21);
\end{circuitikz}
}%    
\caption{If there are three vertex-disjoint paths from a root to disjoint arcs of the cycle containing roots (left), we obtain a rooted $K_4$-minor (right). The roots are depicted by boxes.}
    \label{fig:threesegments}
\end{figure}
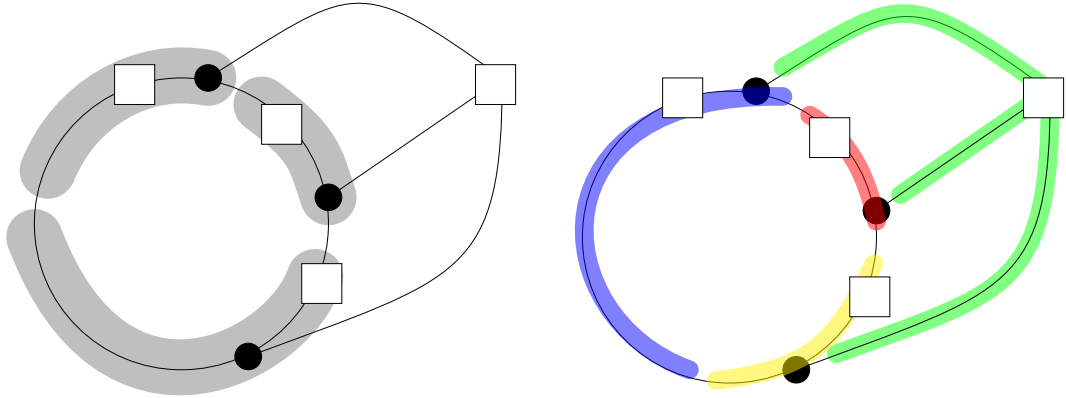

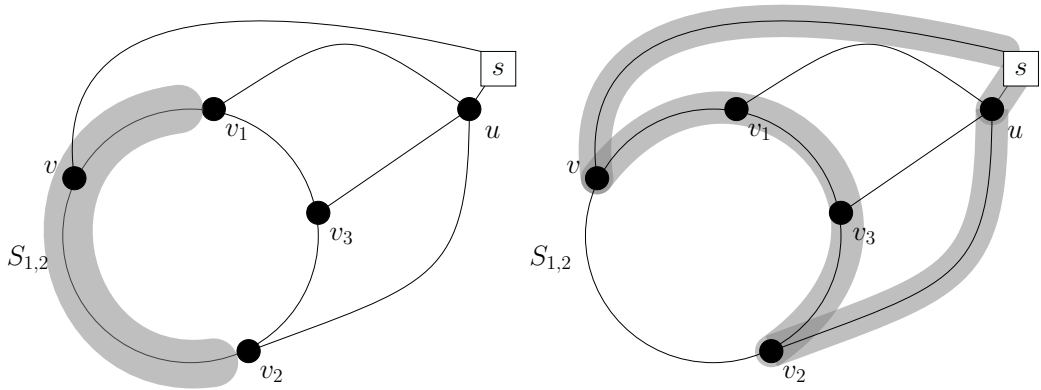
\begin{figure}[!ht]
\centering
\resizebox{1\textwidth}{!}{%
\begin{circuitikz}
\tikzstyle{every node}=[font=\LARGE]
\draw [short] (3.5,12.5) -- (4,13.25);
\draw [short] (-2,12.5) .. controls (0.75,14.25) and (0.75,14.5) .. (3.5,12.5);
\draw [short] (0.25,10.25) -- (3.5,12.5);
\draw [ fill={rgb,255:red,0; green,0; blue,0} ] (0.25,10.25) circle (0.25cm) node {\LARGE $v_3$} ;
\draw [ fill={rgb,255:red,0; green,0; blue,0} ] (-2,12.5) circle (0.25cm);
\draw [short] (-1.25,7.25) .. controls (2.75,8.75) and (3.5,8.75) .. (3.5,12.5);
\node [font=\LARGE] at (0.75,9.75) {};
\node [font=\LARGE] at (0.75,9.75) {};
\node [font=\LARGE] at (-1.5,12) {};
\draw [ fill={rgb,255:red,0; green,0; blue,0} ] (-1.25,7.25) circle (0.25cm);
\draw  (-2.5,9.75) circle (2.75cm);
\draw [ color={rgb,255:red,1; green,0; blue,0} , fill={rgb,255:red,254; green,255; blue,255}] (3.75,13.75) rectangle  node {\LARGE $s$} (4.5,13);
\draw [gray, line width=30pt,  opacity =0.5, line cap=round] (-2.75,12.5) .. controls (-6.5,12) and (-5.5,6.5) .. (-2,7);
\draw [ fill={rgb,255:red,0; green,0; blue,0} ] (3.5,12.5) circle (0.25cm);
\node [font=\LARGE] at (4,12) {$u$};
\node [font=\LARGE, rotate around={-360:(0,0)}] at (-1.5,12) {$v_1$};
\node [font=\LARGE] at (0.75,9.75) {$v_3$};
\node [font=\LARGE] at (-0.75,6.75) {$v_2$};
\draw [ fill={rgb,255:red,0; green,0; blue,0} ] (-5,11) circle (0.25cm) node {\LARGE $v$} ;
\draw [short] (3.75,13.75) .. controls (-4.75,15.75) and (-5.25,12.75) .. (-5,11);
\node [font=\LARGE] at (-6,9.25) {$S_{1,2}$};
\node [font=\LARGE] at (-5.5,11.25) {$v$};
\draw [short] (14.75,12.5) -- (15.25,13.25);
\draw [gray, line width=20pt,  opacity =0.5, line cap=round] (15,13.75) .. controls (6.5,15.75) and (6,12.75) .. (6.25,11);
\draw [gray, line width=20pt,  opacity =0.5, line cap=round] (6.25,11) .. controls (9.25,15) and (14.25,10.25) .. (10,7.25);
\draw [gray, line width=20pt,  opacity =0.5,line cap=round] (14.75,12.5) -- (15.25,13.25);
\draw [gray, line width=20pt,  opacity =0.5] (10,7.25) .. controls (14,8.75) and (14.75,8.75) .. (14.75,12.5);

\draw [short] (9.25,12.5) .. controls (12,14.25) and (12,14.5) .. (14.75,12.5);
\draw [short] (11.5,10.25) -- (14.75,12.5);
\draw [ fill={rgb,255:red,0; green,0; blue,0} ] (11.5,10.25) circle (0.25cm) node {\LARGE $v_3$} ;
\draw [ fill={rgb,255:red,0; green,0; blue,0} ] (9.25,12.5) circle (0.25cm);
\node [font=\LARGE] at (12,9.75) {};
\node [font=\LARGE] at (12,9.75) {};
\node [font=\LARGE] at (9.75,12) {};
\draw [short] (10,7.25) .. controls (14,8.75) and (14.75,8.75) .. (14.75,12.5);
\draw [ fill={rgb,255:red,0; green,0; blue,0} ] (10,7.25) circle (0.25cm);
\draw  (8.75,9.75) circle (2.75cm);
\draw [ color={rgb,255:red,1; green,0; blue,0} , fill={rgb,255:red,254; green,255; blue,255}] (15,13.75) rectangle  node {\LARGE $s$} (15.75,13);
\draw [ fill={rgb,255:red,0; green,0; blue,0} ] (14.75,12.5) circle (0.25cm);
\node [font=\LARGE] at (15.25,12) {$u$};
\node [font=\LARGE, rotate around={-360:(0,0)}] at (9.75,12) {$v_1$};
\node [font=\LARGE] at (12,9.75) {$v_3$};
\node [font=\LARGE] at (10.5,6.75) {$v_2$};
\draw [ fill={rgb,255:red,0; green,0; blue,0} ] (6.25,11) circle (0.25cm) node {\LARGE $v$} ;
\draw [short] (15,13.75) .. controls (6.5,15.75) and (6,12.75) .. (6.25,11);

\node [font=\LARGE] at (5.25,9.25) {$S_{1,2}$};
\node [font=\LARGE] at (5.75,11.25) {$v$};

\end{circuitikz}
}%
\caption{If the segment $S_{1,2}$ contains $v$ and no roots of $C$ (left), then there is a cycle through $v,u,s$ and all roots of $C$.}
\label{fig:cycle_with_s}
\end{figure}

But if such arcs do not exist, then after renumbering we may assume that there is no root vertex between $v_1$ and $v_2$ on $C$. The cycle $C'$ which goes from $v_1$ to $v_2$ via the path contained in $C$ containing $v_3$, and then back to $v_1$ via $r$ via $P_1$ and $P_3$, forms a cycle containing $(R\cap V(C))\cup \{r\}$. Hence we can reiterate with this cycle until $C$ contains $R$.

This means there is also a cycle $C$ in $G'$ containing all vertices in $R$: whenever an edge in $E'$ is used, it is possible to pass through the subdivision vertex instead. We next modify the cycle $C$ to be a cycle containing all vertices in $R$ and all vertices in $S$.

Let $uv\in E'$ be the edge whose subdivision resulted in $s$. 
We first ensure that $C$ contains both $u$ and $v$ (and $R$ and all vertices in $S$ already contained in it). 

Suppose that $u$ is not in $C$. As before, since $G$ is 3-connected and $u\in V(G)$, there must be three paths from $u$ to $C$, vertex-disjoint except for their endpoint $u$, meeting $C$ in distinct endpoints $v_1,v_2,v_3$. Again, we can find these paths in $m^{1+o(1)}$ time with~\cite{DBLP:conf/stoc/0028KLMG24}. We consider the same two cases as before.
\begin{itemize}
    \item \textit{Case 1: there are three disjoint arcs on $C$, each of which contains a vertex of $R\cup S$ and one vertex from $\{v_1,v_2,v_3\}$.} After contracting the edge between $s$ and $u$, we obtain a rooted $K_4$-minor in the same way as in the 3-connected case (see Figure \ref{fig:threesegments}). Hence this case does not happen by assumption.

    \item \textit{Case 2: the segment $S_{1,2}$ between $v_1$ and $v_2$ (excluding $v_1,v_2$ and $v_3$) contains no vertices from $S\cup R$.}    
    Then there is a cycle $C'$ containing $V(C)\cap (S\cup R)$ and $u$. Note that $C'$ no longer contains the vertices from $S_{1,2}$, so we could potentially lose $v$ if $v\in S_{1,2}$. But if $v\in S_{1,2}$, then there is a cycle $C''$ that goes through $V(C)\cap (S\cup R)\cup\{s\}$ (see Figure \ref{fig:cycle_with_s}). Either way we ensured that the cycle contains $u$ without affecting whether it contains $v$.    
\end{itemize}
After renumbering if needed, we are always either in Case 1 or Case 2.
Repeating the argument above for $v$ if needed, we can find a cycle passing through $(R\cup S)\cap V(C)\cup\{u,v\}$.

So we will assume $C$ passes through $u$ and $v$.
Now we claim that one of the two arcs of $C$ between $u,v\in C$ does not contain any vertices from $S\cup R$, and hence we may replace this by the path via $s$ to get a cycle $C$ that additionally passes through $s$.

Suppose this is not the case. 
Recall we assumed in our lemma statement that there are at least three vertices in $R$, and already established that $C$ contains all vertices of $R$.
So we can find $w,x,y$ in $R\cup S$ such that $C$ passes in order through $u, w, v,y,x,u$ (with some vertices in between those, possibly).  We choose $x,y$ such that there are no vertices from $R\cup S$ between $v$ and $y$ and between $x$ and $u$. (See also Figure \ref{fig:StoXY}.) 

Let $S_{uv}$ be the maximum segment of $C$ between $u$ and $v$ which includes $w$ but excludes $u$ and $v$. That is, if $u,a_1,\dots,a_\ell,w,b_1,\dots,b_k,v$ are the vertices of $C$ between $u$ and $v$ containing $w$, then
\[
S_{uv}=\{a_1,\dots,a_\ell,w,b_1,\dots,b_k\}.
\]
If there is a path intersecting $C$ only in its endpoints from a vertex in $S_{uv}$ to $x$ or the segment of $C$ in between $x,y$ that excludes $S_{uv}$, then we have an $(S\cup R)$-rooted $K_4$-minor in $G'$, as shown in Figure \ref{fig:StoXY}.
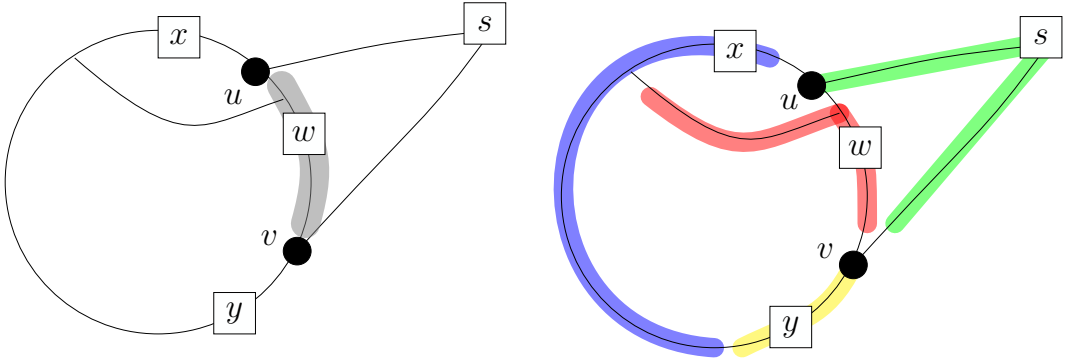
\begin{figure}
    \centering
\resizebox{1\textwidth}{!}{%
\begin{circuitikz}
\tikzstyle{every node}=[font=\LARGE]
\draw [short] (-0.75,11.75) .. controls (1.5,12.25) and (1.25,12.25) .. (3.5,12.5);
\draw [short] (0,8.5) .. controls (2.25,11) and (2.25,10.75) .. (3.5,12.5);
\node [font=\LARGE] at (0.75,9.75) {};
\node [font=\LARGE] at (0.75,9.75) {};
\node [font=\LARGE] at (-1.5,12) {};
\draw  (-2.5,9.75) circle (2.75cm);
\draw [ color={rgb,255:red,1; green,0; blue,0} , fill={rgb,255:red,254; green,255; blue,255}] (3,13) rectangle  node {\LARGE $s$} (3.75,12.25);
\draw [gray, line width=15pt,  opacity =0.5, line cap=round] (-0.28,11.5) .. controls (0.25,10.5) and (0.5,10.25) .. (0.15,9);
\draw [ fill={rgb,255:red,0; green,0; blue,0} ] (-0.75,11.75) circle (0.25cm);
\node [font=\LARGE] at (-1.15,11.3) {$u$};
\draw [ fill={rgb,255:red,0; green,0; blue,0} ] (0,8.5) circle (0.25cm) node {\LARGE $v$} ;
\node [font=\LARGE] at (-0.5,8.75) {$v$};
\draw [ color={rgb,255:red,1; green,0; blue,0} , fill={rgb,255:red,254; green,255; blue,255}] (-2.5,12.75) rectangle  node {\LARGE $x$} (-1.75,12);

\draw [yellow, line width=10pt,  opacity =0.5, line cap=round] (8,6.75) .. controls (9,7.25) and (9.5,7.25) .. (10,8.25);
\draw [blue, line width=10pt,  opacity =0.5, line cap=round] (8.5,12) .. controls (4,13.5) and (3.5,7) .. (7.5,6.75);
\draw [green, line width=10pt,  opacity =0.5, line cap=round]  (9.25,11.5) -- (13.25,12.25);
\draw [green, line width=10pt,  opacity =0.5, line cap=round]  (10.75,9) -- (13.5,12.25);
\draw [red, line width=10pt,  opacity =0.5, line cap=round]  (9.75,11) .. controls (8.25,10.5) and (8,10) .. (6.35,11.3);
\draw [red, line width=10pt,  opacity =0.5, line cap=round] (9.75,11) .. controls (10.25,10.25) and (10.25,10.25) .. (10.25,9);

\draw [ color={rgb,255:red,1; green,0; blue,0} , fill={rgb,255:red,254; green,255; blue,255}] (-1.5,7.75) rectangle  node {\LARGE $y$} (-0.75,7);
\draw [ color={rgb,255:red,1; green,0; blue,0} , fill={rgb,255:red,254; green,255; blue,255}] (-0.25,11) rectangle  node {\LARGE $w$} (0.5,10.25);
\draw [short] (-0.25,11.25) .. controls (-1.75,10.75) and (-2,10.25) .. (-4,12);

\draw [ fill={rgb,255:red,0; green,0; blue,0} , line width=0.2pt ] (0,11.5) circle (0cm);
\draw [short] (9.25,11.5) .. controls (11.5,12) and (11.25,12) .. (13.5,12.25);
\draw [short] (10,8.25) .. controls (12.25,10.75) and (12.25,10.5) .. (13.5,12.25);
\node [font=\LARGE] at (10.75,9.5) {};
\node [font=\LARGE] at (10.75,9.5) {};
\node [font=\LARGE] at (8.5,11.75) {};
\draw  (7.5,9.5) circle (2.75cm);
\draw [ color={rgb,255:red,1; green,0; blue,0} , fill={rgb,255:red,254; green,255; blue,255}] (13,12.75) rectangle  node {\LARGE $s$} (13.75,12);
\draw [ fill={rgb,255:red,0; green,0; blue,0} ] (9.25,11.5) circle (0.25cm);
\node [font=\LARGE] at (8.85,11.2) {$u$};
\node [font=\LARGE] at (10.75,9.5) {$
$};

\draw [ fill={rgb,255:red,0; green,0; blue,0} ] (10,8.25) circle (0.25cm) node {\LARGE $v$} ;
\node [font=\LARGE] at (9.5,8.5) {$v$};
\draw [ color={rgb,255:red,1; green,0; blue,0} , fill={rgb,255:red,254; green,255; blue,255}] (7.5,12.5) rectangle  node {\LARGE $x$} (8.25,11.75);
;
\draw [ color={rgb,255:red,1; green,0; blue,0} , fill={rgb,255:red,254; green,255; blue,255}] (8.5,7.5) rectangle  node {\LARGE $y$} (9.25,6.75);
\draw [ color={rgb,255:red,1; green,0; blue,0} , fill={rgb,255:red,254; green,255; blue,255}] (9.75,10.75) rectangle  node {\LARGE $w$} (10.5,10);
\draw [short] (9.75,11) .. controls (8.25,10.5) and (8,10) .. (6,11.75);

\draw [ fill={rgb,255:red,0; green,0; blue,0} , line width=0.2pt ] (10,11.25) circle (0cm);
\end{circuitikz}
}%
\caption{The segment $S_{uv}$ contains vertices between $u$ and $v$, including $w$ and excluding $u$ and $v$ itself (left). If there is a path from $S_{uv}$ to $x$ or the segment between $x$ and $y$, then there is a rooted $K_4$-minor (right), where $v$ is included with $y$ and $u$ with $s$.}
    \label{fig:StoXY}
\end{figure}

Similarly, if there is a path from $S_{uv}$ to $y$, a vertex in the segment between $x$ and $u$ (not including $u$) or the segment between $y$ and $v$ (not including $v$), then there is an $(S\cup R)$-rooted $K_4$-minor in $G'$, as depicted in Figure \ref{fig:StoXU}.

\begin{figure}[!ht]
\centering
\resizebox{1\textwidth}{!}{%
\begin{circuitikz}
\tikzstyle{every node}=[font=\LARGE]
\draw [yellow, line width=10pt,  opacity =0.5, line cap=round] (8,6.75) .. controls (8.75,7) and (9,7.25) .. (9.5,7.75);
\draw [blue, line width=10pt,  opacity =0.5, line cap=round] (9.25,11.5) .. controls (5.75,14.75) and (2,7.5) .. (7.5,6.75);
\draw [green, line width=10pt,  opacity =0.5, line cap=round] (10.25,11.75) .. controls (11.75,12.25) and (11.75,12) .. (13.25,12.25);
\draw [green, line width=10pt,  opacity =0.5, line cap=round] (10,8.25) -- (13.5,12.25);
\draw [red, line width=10pt,  opacity =0.5, line cap=round] (9.75,11) .. controls (8,9) and (8.75,9.5) .. (8.75,7.75);
\draw [red, line width=10pt,  opacity =0.5, line cap=round] (9.75,11) .. controls (10.25,10.25) and (10.25,10.25) .. (10.25,9);
\draw [yellow, line width=10pt,  opacity =0.5, line cap=round] (-2.75,6.75) .. controls (-1.75,7.25) and (-1.25,7.25) .. (-0.75,8.25);
\draw [blue, line width=10pt,  opacity =0.5, line cap=round] (-2.25,12) .. controls (-6.75,13.5) and (-7.25,7) .. (-3.25,6.75);
\draw [ green, line width=10pt,  opacity =0.5, line cap=round] (-1.5,11.5) -- (2.5,12.25);
\draw [green, line width=10pt,  opacity =0.5, line cap=round] (0,9) -- (2.75,12.25);
\draw [ red, line width=10pt,  opacity =0.5, line cap=round] (-1,11) .. controls (-0.5,10.25) and (-0.5,10.25) .. (-0.5,9);
\draw [ red, line width=10pt,  opacity =0.5, line cap=round] (-1,11) .. controls (-1.5,10.25) and (-2.75,10.75) .. (-2.5,11.25);
\draw [short] (9.25,11.5) .. controls (11.5,12) and (11.25,12) .. (13.5,12.25);
\draw [short] (10,8.25) .. controls (12.25,10.75) and (12.25,10.5) .. (13.5,12.25);
\node [font=\LARGE] at (10.75,9.5) {};
\node [font=\LARGE] at (10.75,9.5) {};
\node [font=\LARGE] at (8.5,11.75) {};
\draw  (7.5,9.5) circle (2.75cm);
\draw [ color={rgb,255:red,1; green,0; blue,0} , fill={rgb,255:red,254; green,255; blue,255}] (13,12.75) rectangle  node {\LARGE $s$} (13.75,12);

\draw [ fill={rgb,255:red,0; green,0; blue,0} ] (9.25,11.5) circle (0.25cm);
\node [font=\LARGE] at (8.75,11) {$u$};

\draw [ fill={rgb,255:red,0; green,0; blue,0} ] (10,8.25) circle (0.25cm) node {\LARGE $v$} ;
\node [font=\LARGE] at (9.5,8.5) {$v$};
\draw [ color={rgb,255:red,1; green,0; blue,0} , fill={rgb,255:red,254; green,255; blue,255}] (7.5,12.5) rectangle  node {\LARGE $x$} (8.25,11.75);

\draw [ color={rgb,255:red,1; green,0; blue,0} , fill={rgb,255:red,254; green,255; blue,255}] (8.5,7.5) rectangle  node {\LARGE $y$} (9.25,6.75);
\draw [ color={rgb,255:red,1; green,0; blue,0} , fill={rgb,255:red,254; green,255; blue,255}] (9.75,10.75) rectangle  node {\LARGE $w$} (10.5,10);
\draw [short] (9.75,11) .. controls (8,9.25) and (8.75,9.5) .. (8.75,7.25);

\draw [ fill={rgb,255:red,0; green,0; blue,0} , line width=0.2pt ] (10,11.25) circle (0cm);

\draw [short] (-1.5,11.5) .. controls (0.75,12) and (0.5,12) .. (2.75,12.25);
\draw [short] (-0.75,8.25) .. controls (1.5,10.75) and (1.5,10.5) .. (2.75,12.25);
\node [font=\LARGE] at (0,9.5) {};
\node [font=\LARGE] at (0,9.5) {};
\node [font=\LARGE] at (-2.25,11.75) {};
\draw  (-3.25,9.5) circle (2.75cm);
\draw [ color={rgb,255:red,1; green,0; blue,0} , fill={rgb,255:red,254; green,255; blue,255}] (2.25,12.75) rectangle  node {\LARGE $s$} (3,12);
\draw [ fill={rgb,255:red,0; green,0; blue,0} ] (-1.5,11.5) circle (0.25cm);
\node [font=\LARGE] at (-2,11) {$u$};

\draw [ fill={rgb,255:red,0; green,0; blue,0} ] (-0.75,8.25) circle (0.25cm) node {\LARGE $v$} ;
\node [font=\LARGE] at (-1.25,8.5) {$v$};
\draw [ color={rgb,255:red,1; green,0; blue,0} , fill={rgb,255:red,254; green,255; blue,255}] (-3.25,12.5) rectangle  node {\LARGE $x$} (-2.5,11.75);
\draw [ color={rgb,255:red,1; green,0; blue,0} , fill={rgb,255:red,254; green,255; blue,255}] (-2.25,7.5) rectangle  node {\LARGE $y$} (-1.5,6.75);
\draw [ color={rgb,255:red,1; green,0; blue,0} , fill={rgb,255:red,254; green,255; blue,255}] (-1,10.75) rectangle  node {\LARGE $w$} (-0.25,10);
\draw [short] (-1,11) .. controls (-1.25,10.5) and (-3.25,10.5) .. (-2.25,12);

\draw [ fill={rgb,255:red,0; green,0; blue,0} , line width=0.2pt ] (-0.75,11.25) circle (0cm);

\end{circuitikz}
}%
    \caption{If there is a path from $S_{uv}$ to the segment between $x$ and $u$ (not including $u$), then there is a rooted $K_4$-minor (left). If there is a path from $S_{uv}$ to $y$, then there is a rooted $K_4$-minor (right).}
    \label{fig:StoXU}
\end{figure}
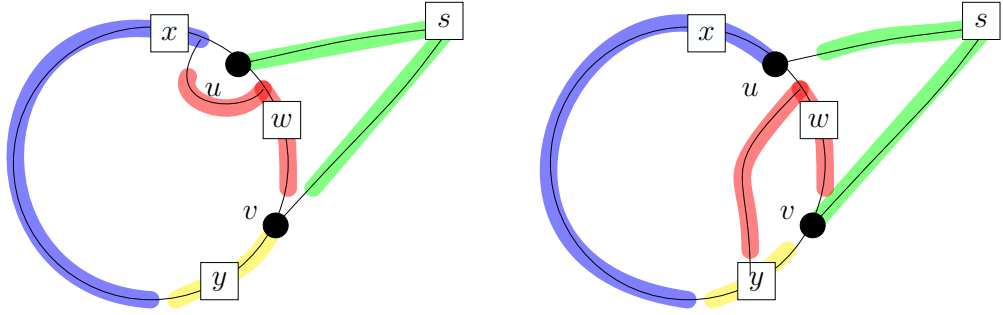

Since $w$ and $x$ are not the subdivision vertex of $u$ and $v$, $S_{uv}$ and $V(C)\setminus (\{u,v\}\cup S_{uv})$ both contain at least one vertex from $G$. Therefore, there is a path in $G\setminus \{u,v\}$ from $S_{uv}\cap V(G)$ to $(C\setminus S_{uv})\cap V(G)$, since $G$ is 3-connected. This means that one of the paths mentioned above exists, contradicting the fact that $G'$ has no $(S\cup R)$-rooted $K_4$-minor. This contradiction shows that $C$ must contain all vertices of $S$, as desired. 

Hence, one of the two arcs of $C$ between $u,v\in C$ does not contain any vertices from $S\cup R$, and we may replace this by the path via $s$ to get a cycle $C$ that additionally passes through $s$. Reiterating this argument at most $n$ times gives the cycle passing through $R \cup S$. Since each iteration takes at most $m^{1+o(1)}$ time, the lemma follows.
\end{proof}

\section{Description of algorithm}
This section is organized as follows.
In \Cref{ssec:virtual-problem}, we define an auxiliary problem called \textsc{Virtual Edge Steiner Tree}.
In \Cref{ssec:dynamic-programming} we present a dynamic programming algorithm for \textsc{Virtual Edge Steiner Tree} for the case that all roots lie on a cycle. This cycle will be provided to us by Lemma~\ref{lem:comb_lemma}.
\Cref{ssec:preprocess} discusses preprocessing steps.
Finally, in \Cref{ssec:main-algorithm}, we describe our algorithm for \textsc{Virtual Edge Steiner Tree}.

\subsection{Virtual edges} \label{ssec:virtual-problem}
We first slightly generalize the \textsc{Steiner Tree} problem to a problem that we call \textsc{Virtual Edge Steiner Tree} in order to facilitate a recursive approach. 

This problem is defined as follows. The input to \textsc{Virtual Edge Steiner Tree} is 
\begin{itemize}
    \item a graph $G=(V,E)$ with weights $w_e$ for $e \in E$,
    \item a set of terminals $T = \{ t_1, \dots, t_{k} \}\subseteq V$,
    \item a set of ``virtual edges'' $E^*$ with weights $w_{e^*} : \{u,v,\mathsf{d},\mathsf{c}\}\to \mathbb{R}_{\geq 0}$ for $e^*=\{u,v\} \in E^*$.
\end{itemize}
We will assume that $E\cap E^*=\emptyset$.
\smallskip

A \textit{solution} to the \textsc{Virtual Edge Steiner Tree} problem is a (virtual) edge set $S \subseteq E \cup E^*$ such that $T \subseteq V(S)$ and $V(S) \cap V(e^*) \ne \emptyset$ for each virtual edge $e^* \in E$. Since the edge weights are non-negative we can assume $S$ is a tree.
For a (virtual) edge set $S$, the \textit{cost} $c_{e^*}(S)$ of a virtual edge $e^*=\{u,v\}\in E^*$ with respect to $S$ is given by
\begin{itemize}
    \item $w_{e^*}(u)$ if $u\in V(S)$ and $v\not\in V(S)$, representing the cost when $u$ is included in the solution and $v$ is not,
    \item $w_{e^*}(v)$ if $v\in V(S)$ and $u\not\in V(S)$, representing the cost when $v$ is included in the solution and $u$ is not,
    \item $w_{e^*}(\mathsf{c})$ if $e\in S$,  representing the cost when $u,v$ are both included in the solution and connected ``via the virtual edge'', that is, in the part of the graph we ``forgot about'',
    \item $w_{e^*}(\mathsf{d})$ if $u,v\in V(S)$ and $e\not \in S$, representing the cost when $u,v$ are both included in the solution but are not connected ``via the virtual edge''.
\end{itemize}
The \textit{total cost} of a solution $S$ is defined as
\begin{equation}
\label{eq:cost}
c(S)=\sum_{e\in S}w_e+\sum_{e^*\in E^*}c_{e^*}(S).
\end{equation}
The purpose of virtual edges with cost functions is to enable our algorithm to handle 2-cuts $\{u,v\}$ effectively. When the algorithm identifies such a cut, it recursively solves the four subproblems for the smaller side first, encoding the costs in $w_{\{u,v\}}$.
It is crucial to ensure that the solutions on both sides of the cut agree on the inclusion of vertices $u$ and $v$ in the final solution. Additionally, if both vertices are included, the algorithm must determine which side will contain the path connecting them.

In an instance of \textsc{Virtual Edge Steiner Tree}, we will assume that there is no terminal incident with a virtual edge.
If there is a terminal $t$ and a virtual edge $tt'$, then we assign the virtual edge weight $w_{tt'}(t') = \infty$, and remove $t$ from the terminal set. 

It will be convenient for the description of the dynamic program to view a root as either a singleton set consisting of a vertex (if it is a terminal), or an edge (if it is a virtual edge). Therefore, for an instance $(G$, $w$, $T$, $E^*$, $\{w_{e^*}\}_{e^* \in E^*})$, we define the set of \emph{roots} as $R = \{\{t\}:t\in T\} \cup E^*$.
We say the instance has no rooted $K_4$-minor if the graph $G^*$ has no $R^*$-rooted $K_4$-minor, where
$G^*$ is the graph obtained from the graph $(V, E \cup E^*)$ by subdividing each edge in $E^*$ once, and $R^*$ is the union of $T$ and the subdivision vertices of $G^*$.

Note that \textsc{Virtual Edge Steiner Tree} can be reduced to \textsc{Steiner Tree} as follows:

\begin{observation} \label{obs:virtual-edge}
    There is a reduction from an instance of \textsc{Virtual Edge Steiner Tree} with $k$ terminals and $\ell$ virtual edges to $4^{\ell}$ instances of \textsc{Steiner Tree} with at most $k + 2 \ell$ terminals.
\end{observation}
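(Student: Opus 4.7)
The plan is to enumerate over all $4^\ell$ possible state assignments $\sigma : E^* \to \{u, v, \mathsf{c}, \mathsf{d}\}$ for the virtual edges, where (for each $e^* = \{u,v\}$) the four options correspond exactly to the four cases in the cost function $c_{e^*}(S)$. For each $\sigma$, I would build a standard \textsc{Steiner Tree} instance $I_\sigma$ whose optimum, plus a fixed cost $\mathrm{Fix}(\sigma) = \sum_{e^*\in E^*} w_{e^*}(\sigma(e^*))$, equals the minimum \textsc{Virtual Edge Steiner Tree} cost over solutions in which each $e^*$ has the state prescribed by $\sigma$. The overall answer is $\min_\sigma \bigl(\mathrm{Fix}(\sigma) + \mathrm{opt}(I_\sigma)\bigr)$.

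To construct $I_\sigma$ from $(V, E, T, E^*)$, for each virtual edge $e^* = \{u, v\}$ I would modify the instance according to $\sigma(e^*)$: in state $u$, add $u$ as a terminal and delete $v$ (enforcing $u\in V(S)$ and $v\notin V(S)$); symmetrically in state $v$; in state $\mathsf{c}$, add both $u$ and $v$ as terminals and insert a zero-weight edge $\{u, v\}$ to emulate the ``connection via $e^*$''; in state $\mathsf{d}$, add both $u, v$ as terminals without adding any new edge. If $\sigma$ forces a vertex to be both a terminal and deleted (a conflict between two virtual edges' states), we declare that $I_\sigma$ is infeasible and set $\mathrm{opt}(I_\sigma) = \infty$. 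Each virtual edge adds at most two new terminals (in states $\mathsf{c}, \mathsf{d}$), giving at most $k + 2\ell$ terminals in $I_\sigma$.

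For correctness, I would show the two matching directions. For $\min_\sigma(\mathrm{Fix}(\sigma) + \mathrm{opt}(I_\sigma)) \leq \mathrm{OPT}$, take any optimal \textsc{Virtual Edge Steiner Tree} solution $S^*$ and let $\sigma^*$ record its state at each virtual edge; then $(S^* \cap E)$ together with the added zero-weight edges for state-$\mathsf{c}$ virtual edges is a feasible Steiner tree for $I_{\sigma^*}$ of weight $c(S^*) - \mathrm{Fix}(\sigma^*)$. For the other direction, given $\sigma$ and an optimal Steiner tree $S_\sigma$ of $I_\sigma$, I would define $S := (S_\sigma \cap E) \cup \{e^* : \sigma(e^*) = \mathsf{c}\}$, verify that $S$ is a feasible \textsc{Virtual Edge Steiner Tree} solution whose induced states agree with $\sigma$, and check that $c(S) \leq w(S_\sigma) + \mathrm{Fix}(\sigma)$.

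The main subtlety lies in the second direction: one must ensure the extracted $S$ matches $\sigma$'s prescribed states exactly. The deletions for states $u$ and $v$ directly enforce the required non-incidence; the zero-weight edge for state $\mathsf{c}$ is cost-neutral, so we may assume $S_\sigma$ uses it and then $e^*\in S$ correctly realises the virtual connection. Some care is needed for degenerate single-terminal cases in which the standard Steiner tree admits a trivial single-vertex solution that would not give $u\in V(S)$; these can be ruled out by an explicit feasibility check on $I_\sigma$ without affecting the asymptotic bound of $4^\ell$ instances with at most $k+2\ell$ terminals.
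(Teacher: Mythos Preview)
Your proposal is correct and follows essentially the same approach as the paper: enumerate the $4^\ell$ state profiles and, for each, build a plain \textsc{Steiner Tree} instance by deleting the ``excluded'' endpoint in states $u$/$v$ and promoting the relevant endpoints to terminals. The only difference is in the $\mathsf{c}$ case: the paper \emph{contracts} $u$ and $v$ into a single new terminal, whereas you keep both as terminals and add a zero-weight edge; these are equivalent gadgets, and your correctness discussion (including the conflict and degenerate checks) covers the details the paper leaves implicit.
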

\begin{proof}
    Recall that there are four cases for each virtual edge: (i) $u$ is covered but $v$ is not, (ii) $v$ is covered but $u$ is not, (iii) both $u$ and $v$ are covered and the edge $uv$ is part of the solution, and (iv) $u$ and $v$ are covered and the edge $uv$ is not part of the solution.
    For each virtual edge, we delete it from the graph, and do one of the following.
    For case (i), add $u$ as a terminal and delete $v$.
    For case (ii), add $v$ as a terminal and delete $u$.
    For case (iii), contract $u$ and $v$ into a single vertex and add it as a terminal.
    For case (iv), add both $u$ and $v$ as terminals.
    This results in a \textsc{Steiner Tree} instance with at most $k+2\ell$ terminals.
    The minimum cost of solutions among these instances plus the virtual edge weights is the minimum cost of the original instance of \textsc{Virtual Edge Steiner Tree}.
\end{proof}

\subsection{Dynamic programming when roots lie on a cycle} \label{ssec:dynamic-programming}
In this subsection, we present a polynomial-time algorithm for the \textsc{Virtual Edge Steiner Tree} problem when all roots lie on a cycle.

\begin{lemma}
An instance of \textsc{Virtual Edge Steiner Tree} on an $n$-vertex graph without rooted $K_4$-minor that has a cycle passing through all roots, can be solved in $O(n^4)$ time.
\end{lemma}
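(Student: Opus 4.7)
The plan is to adapt the Dreyfus--Wagner-style dynamic programming in the spirit of Erickson, Monma, and Veinott: instead of iterating over all subsets of roots, the DP iterates only over those subsets that form a contiguous arc of the cycle $C$ supplied by Lemma~\ref{lem:comb_lemma}, reducing the number of subproblems from exponential to quadratic.

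First I would prove the structural lemma underpinning the restriction to contiguous arcs. Let $r_1, \dots, r_k$ denote the roots in cyclic order on $C$ and let $S$ be an optimal tree-solution. I would argue that for every edge $e \in S$, removing $e$ separates $S$ into subtrees whose covered root sets are contiguous arcs of $C$. The argument is by contradiction: a non-contiguous split yields four roots $r_a, r_b, r_c, r_d$ appearing in cyclic order on $C$ with $r_a, r_c$ on one side and $r_b, r_d$ on the other; the in-subtree paths connecting $r_a \leftrightarrow r_c$ and $r_b \leftrightarrow r_d$, lifted to $G'$ by prepending or appending an edge $us$ or $sv$ whenever a root is a virtual-edge subdivision vertex, are vertex-disjoint paths in $G'$. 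Together with $C$, they force a rooted $K_4$-minor in $G'$ via Lemma~\ref{lemma:cycle-to-k4}, contradicting the hypothesis.

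Next, I would define the dynamic program. For each arc $[i, j]$ and each vertex $v \in V(G)$, let $f(i, j, v)$ be the minimum cost of a connected $S \subseteq E \cup E^*$ that contains $v$ and covers every root in $[i, j]$. The base case $f(\ell, \ell, v)$ equals $\dist(r_\ell, v)$ when $r_\ell$ is a terminal; when $r_\ell$ is a virtual edge $e^* = \{a, b\}$, it is the minimum over the four states of $e^*$, namely $\dist(a, v) + w_{e^*}(a)$, $\dist(b, v) + w_{e^*}(b)$, $\min(\dist(a, v), \dist(b, v)) + w_{e^*}(c)$ with $e^*$ used as a tree edge, and the Steiner tree cost on $\{v, a, b\}$ (computable in $O(n)$ from precomputed APSP) plus $w_{e^*}(d)$. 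The transitions are the standard \emph{merge} $f(i, j, v) = \min_{i \le \ell < j} [f(i, \ell, v) + f(\ell+1, j, v)]$ and \emph{shortest-path extension} $f(i, j, v) = \min_u [f(i, j, u) + \dist(u, v)]$; the final answer is $\min_v f(1, k, v)$.

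The main obstacle is accounting for virtual-edge costs correctly: the global cost of a virtual edge depends on which of its endpoints appear in $V(S)$, which is not determined locally within the arc containing the subdivision vertex alone. I would resolve this by enriching the DP state by a constant factor, tracking for any virtual edge at the boundary of the current arc which of its four states is in force; the merge transition then enumerates (a constant number of combinations of) the states of the two boundary roots being ``internalised'', so that the cost $w_{e^*}(\sigma)$ is charged consistently when the virtual edge disappears from the boundary. The structural property from the first step guarantees that the DP's enumeration captures the optimal configuration. The running time analysis is then routine: there are $O(n^2)$ arcs, $O(n)$ vertices, and $O(1)$ boundary-state combinations per DP cell, for $O(n^3)$ states in total; the merge transition costs $O(n)$ per state, and the shortest-path extensions cost $O(n)$ per state after $O(n^3)$ APSP preprocessing, yielding the claimed $O(n^4)$.
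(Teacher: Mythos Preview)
Your plan is essentially the paper's approach: interval-restricted Dreyfus--Wagner justified by a rooted-$K_4$-minor argument, with a constant-size state at the two arc boundaries to record the status of any virtual edge sitting there. The running-time accounting matches.

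Two places where your sketch is looser than what the paper actually needs. First, your merge rule splits $[i,j]$ into \emph{disjoint} arcs $[i,\ell]$ and $[\ell{+}1,j]$, but with virtual edges the two subtrees you are gluing can share a root: if $r_\ell=\{u,u'\}$ has $u$ in one subtree and $u'$ in the other, then $r_\ell$ lies in both arcs. The paper's DP therefore also glues $[a_1,b_1]$ with $[a_2,b_2]$ when $b_1=a_2$ (and even $b_1=a_2$ together with $a_1=b_2$ at the very end), with explicit bookkeeping that replaces the two singleton charges $w_{e^*}(u),w_{e^*}(u')$ by $w_{e^*}(\mathsf d)$ or $w_{e^*}(\mathsf c)$ as appropriate. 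Your ``internalise the boundary states'' paragraph gestures at this, but the disjoint-arc recurrence as written will miss these cases. Second, your structural lemma (``removing any edge of $S$ leaves two contiguous arcs'') is the right engine but is not by itself enough for DP completeness: at a branching vertex $w$ with children $w_1,\dots,w_d$ you must argue that the children's arcs can be grouped into two arcs that intersect only at endpoints. The paper proves the extra ingredient---if two sibling arcs share a root $r_x$, then $r_x$ is an \emph{endpoint} of each---by another application of Lemma~\ref{lemma:cycle-to-k4}, and it also spends a nontrivial paragraph choosing the root of the tree so that no child's arc is all of $R$ (this is easy if there is a terminal, but when every root is a virtual edge with both endpoints in $S$ the paper resorts to a balanced-separator argument). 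Finally, your ``shortest-path extension'' $f(i,j,v)\leftarrow f(i,j,u)+\dist(u,v)$ is clean in the EMV setting but here a shortest $u\text{--}v$ path may traverse an endpoint of some virtual edge $r_m\notin[i,j]$, silently changing that edge's status; the paper sidesteps this by extending one edge at a time only to vertices not incident with any root (rule~\textcolor{gray}{\textbf{R2}}), and handling root-incident vertices exclusively through gluing.
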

\begin{proof}
    We may also assume that there are at least five roots, since otherwise the problem can be efficiently solved via \Cref{obs:virtual-edge}.
    Let $R = \{ r_1, \dots, r_k \}$ denote the set of roots.
    We renumber the indices so that $r_1, \dots, r_k$ appear on $C$ in the order of their indices. Recall that no terminal is incident with a virtual edge, and since the virtual edges lie on $C$, each non-terminal is incident with at most two virtual edges. 
    Indices will be considered modulo $k$ (identifying $r_k$ with $r_0$).
    We use the shorthand $R[a, b]$ to denote the set of roots $\{ r_c \in R : c \in [a, b] \}$.
    In particular, $R[a, b] = \{ r_a, \dots, r_k, r_1, \dots, r_b \}$ for $a > b$.

    We say a tree $\mathcal{T}=(V(\mathcal{T}),E(\mathcal{T}))$ is a \textit{partial interval solution} if $V(\mathcal{T})\subseteq V(G)$, $E(\mathcal{T})\subseteq E\cup E^*$ and the set $R(\mathcal{T})$ of roots $r\in R$ for which $r \cap V(\mathcal{T}) \ne \emptyset$ forms an interval $R[a,b]$ for some $a,b\in [k]$.

    We determine the status of virtual edges incident with the partial interval solution in the same way as we do for a solution, and we also assign a cost in a way similar to (\ref{eq:cost}), but we now ignore virtual edges that are not incident with $S$
    \[
    c(\mathcal{T})=\sum_{e\in E(\mathcal{T})\cap E} w_e+\sum_{e^*\in E(\mathcal{T})\cap E^*} c_{e^*}(E(\mathcal{T})).
    \]
    Note that each solution $S\subseteq E\cup E^*$ gives rise to a partial interval solution $\mathcal{T}=(V(E)\cup V(E^*), E\cup E^*)$, since  $R(\mathcal{T})=[k]$ is always an interval.
    
    We say a partial interval solution $\mathcal{T}$ is a \textit{minimal solution} if $R(\mathcal{T})=R$ (so it contains all terminals and contains at least one endpoint per virtual edge) and it is minimal in the sense that there is no strict subtree which also has this property. The last assumption is needed for technical reasons since the weights could also be zero. 
    
    We compute a dynamic programming table $\mathsf{DP}$ with entries $\mathsf{DP}[a,b,v,s_a,s_b]$, where $a,b \in [k]$, $v \in V(G) \setminus T$, $s_a \in r_a \cup \{\mathsf{d},\mathsf{c}\}$  and $s_b \in r_b \cup \{\mathsf{d},\mathsf{c}\}$.
    If $r_a$ is a terminal, then the table may ignore $s_a$: we will store the same value irrespective $s_a$. 
    The same applies to~$s_b$. 
    
    The table entry $\mathsf{DP}[a, b, v,s_a,s_b]$ represents the minimum cost of a partial interval solution $\mathcal{T}$ with $R(\mathcal{T}) \supseteq R[a, b]$ and $v \in V(\mathcal{T})$, where the status of $r_a$ and $r_b$ are indicated by $s_a$ and $s_b$, respectively. In fact, the minimum will not go over all partial interval solutions, but only those that can be built in a specific manner as defined next. This means the lower bound is always true.

   We define a collection $\mathcal{B}$  of partial interval solutions via the following set of rules.  
    \begin{itemize}
        \item[\textcolor{gray}{\textbf{R1}}] Every partial interval solution $(\{v\},\emptyset)$ consisting of a single vertex  with $v\in V(G)$ is in $\mathcal{B}$. 
        \item[\textcolor{gray}{\textbf{R2}}] Every partial interval solution $\mathcal{T}$ obtained from a partial interval solution $\mathcal{T}'
    \in\mathcal{B}$ by adding a new vertex $v\in V(G)$ not incident to any roots with an edge in $E$ is in $\mathcal{B}$.
        \item[\textcolor{gray}{\textbf{R3}}] Every partial interval solution $\mathcal{T}$ obtained from two partial interval solutions $\mathcal{T}_1,\mathcal{T}_2\in \mathcal{B}$ by `gluing on $v$' is in $\mathcal{B}$ when $V(\mathcal{T}_1)\cap V(\mathcal{T}_2)=\{v\}$ and all points in the intersection of the intervals $R(\mathcal{T}_1)$ and $R(\mathcal{T}_2)$ are endpoints of both intervals.
    \end{itemize}
    Note that by the definition of partial interval solution, $R(\mathcal{T})$ is an interval $R[a,b]$ for each $\mathcal{T}\in \mathcal{B}$. 
    The following claim shows our dynamic program may restrict itself to partial interval solutions.
    \begin{claim}\label{clm:family}
    Let $\mathcal{T}$ be a minimal solution. Then $\mathcal{T}\in \mathcal{B}$.
    \end{claim}
    \begin{proof}
       We want to root our tree $\mathcal{T}$ in  a well-chosen vertex $v$ and eventually prove the claim by induction. For $w\in V(\mathcal{T})$, let  $\mathcal{T}_w$ denote the subtree of $\mathcal{T}$ containing all descendants of $w$. The choice of $v$ needs to ensure that $R(\mathcal{T}_w)\neq [k]$ for all children $w$ of $v$. By minimality, $R(\mathcal{T}_w)\neq \emptyset$ for all children $w$ of $v$.

      If there is at least one terminal $r_a$, then we can choose $v=r_a$, since $r_a$ will then be missing from $R(\mathcal{T}_w)$ for all $w\neq v$. We show in the remainder of this paragraph that such a vertex $v$ can also be chosen when no terminal exists. If there is a virtual edge incident to a single vertex $u$, we can choose $v=u$ similar to the terminal case. Moreover, if there is a vertex $u$ incident with two virtual edges, then we may choose $v=u$.
      Otherwise, all roots come from virtual edges that have both endpoints present in $\mathcal{T}$ and each vertex is incident with at most one edge. 
      We give each vertex of $G$ incident with a virtual edge weight $1$.  Since $\mathcal{T}$ is a tree, we can root it in a `balanced separator': a vertex $v$ such that the total weight in each component of $\mathcal{T}-v$ is at most half the total weight of $\mathcal{T}$. The only way $R(\mathcal{T}_w)$ can then be $[k]$ for $w$ a child of $v$, is when all roots are coming from edges incident with one vertex in $\mathcal{T}_w$ and one outside. Since there are at least five roots, we can choose roots $r_a,r_b,r_c,r_d$ consecutive on the cycle, and connect $r_a$ to $r_c$ using only internal vertices from $V(\mathcal{T}_w)$ and $r_b$ to $r_d$ using only internal vertices not in $V(\mathcal{T}_w)$, finding a rooted $K_4$-minor with  \Cref{lemma:cycle-to-k4}, a contradiction.

        In this paragraph, we show that $R(\mathcal{T}_w)$ is an interval. We are done if $|R(\mathcal{T}_w)| \ge k - 1$.
        So we assume that $|R(\mathcal{T}_w)| \le k - 2$ and $w\neq v$. Suppose towards a contradiction that $R(\mathcal{T}_w)$ is not an interval. This means there exist $a < b < c < d$ in $[k]$ such that $r_a , r_{c} \in R(\mathcal{T}_w)$ and $r_b, r_d \not \in R(\mathcal{T}_w)$ (or $b<c<d<a$, but this case is analogous).
        This means there is a path between (the roots of) $r_a$ and $r_{c}$ via vertices in $V(\mathcal{T}_w)$. Moreover, $r_b$ and $r_d$ are not in $R(\mathcal{T}_w)$, and so they are either part of the tree $\mathcal{T}'$ obtained from $\mathcal{T}$ by removing $\mathcal{T}_w$, or incident to a vertex in this tree. This provides a path between (the roots of) $r_b$ and $r_d$ using vertices not in $V(\mathcal{T}_w)\cup R(\mathcal{T}_w)$.
        Thus, by \Cref{lemma:cycle-to-k4}, we obtain an $\{ r_a, r_b, r_c, r_d \}$-rooted $K_4$-minor, a contradiction.

        In this paragraph we show that $\mathcal{S}_w=(\{w\},\emptyset)\in \mathcal{B}$ for each $w\in V(\mathcal{T})$. 
        For this,  we need to show that $R(\mathcal{T})$ forms an interval. Recall that no terminal is incident with a virtual edge, and since the virtual edges lie on $C$, each non-terminal is incident with at most two virtual edges which are then also consecutive on the cycle. This means $|\mathcal{S}_w|\leq 2$ and that $R(\mathcal{S}_w)$ is an interval. Since $\mathcal{S}_w$ consists of a single vertex, it is now added to $\mathcal{B}$ in \textcolor{gray}{\textbf{R1}}.
        
        The remainder of the proof shows that $\mathcal{T}_w\in \mathcal{B}$ for each $w\in V(\mathcal{T})$ by induction on $|V(\mathcal{T}_w)|$. We already proved (a stronger variant of) the case when the size is 1 above.     
         Suppose $\mathcal{T}_w\in \mathcal{B}$ has been shown for all $w$ with $|V(\mathcal{T}_w)|\leq \ell$ for some $\ell\geq 1$ and now assume $|V(\mathcal{T}_w)|= \ell+1$. 

        We first show a property that we need in both cases considered below. Let $w'$ be a child of $w$ and suppose that $r_x\in R(\mathcal{S}_w)\cap R(\mathcal{T}_{w'})$. Then $r_x$ must correspond to a virtual edge $ww_1$ with $w_1\in V(\mathcal{T}_{w'})$.  We show that $r_x$ is an endpoint of the interval $R(\mathcal{T}_{w'})$. Recall that $R(\mathcal{T}_{w'})\neq [k]$ by choice of $v$. If $r_x$ is not an endpoint, then there are vertices $r_a,w_1,r_x,w,r_b,r_d$ appearing in order on the cycle for some $r_a,r_b\in R(\mathcal{T}_{w'})$ and $r_d\not\in R(\mathcal{T}_{w'})$. In particular, $r_x$ can be connected to $r_d$ via $w$ using vertices outside of $V(\mathcal{T}_{w'})$ and we can connect $r_a$ and $r_b$ using internal vertices in $V(\mathcal{T}_{w'})$.         
            We then apply \Cref{lemma:cycle-to-k4} to obtain an $\{ r_{a}, r_{x}, r_{b}, r_{d} \}$-rooted $K_4$-minor, a contradiction. 

        We now turn to the proof that $\mathcal{T}_w\in \mathcal{B}$. We already proved its roots form an interval.
        Suppose that $w$ has a single child $w'$. By the induction hypothesis, $\mathcal{T}_{w'}$ is a partial interval solution. If $w$ is not incident with any roots, then $\mathcal{T}_w$ can be obtained from $\mathcal{T}_{w'}$ with \textcolor{gray}{\textbf{R2}}. 
        
        If $w$ is incident with at least one root, we show we can obtain $\mathcal{T}_w$ by gluing the partial interval solution $\mathcal{S}_w=(\{w\},\emptyset)$ with $\mathcal{T}_{w'}$. To apply \textcolor{gray}{\textbf{R3}}, we need to show all points in the intersection $R(\mathcal{S}_w)\cap R(\mathcal{T}_{w'})$ are endpoints of the corresponding intervals. 
        \begin{itemize}
            \item If $w$ is a terminal $r_a$, then  it is not incident with any virtual edges and hence there are no intersection points. 
\item Suppose that $w$ is not a terminal, but it is incident with either one or two virtual edges $r_{x_1}=ww_2$ and $r_{x_2}=ww_2$. This means $R(\mathcal{S}_w)=\{r_{x_1},r_{x_2}\}$ and so these are the only possible intersection points. We proved above that if $r_{x_j}$ exists and is in $R(\mathcal{T}_{w'})$, it must be an endpoint of that interval (for $j=1,2$). Moreover, each root in $R(\mathcal{S}_w)$ is automatically an endpoint of the interval since the size is 1 or 2. 
     \end{itemize}
        So we may assume $w$ has at least two children.
        Let $w_1,\dots,w_d$ denote the children of $w$ in $\mathcal{T}$  and for each $i\in [d]$, let $\mathcal{T}_i=\mathcal{T}_{w_i}$ denote the subtree of $\mathcal{T}$ containing all descendants of $w_i$. We already showed that for each $i\in [d]$, $R(\mathcal{T}_i)$ is an interval $R[a_i,b_i]$ for some $a_i,b_i\in [k]$. 

        The union of these intervals will be $R(\mathcal{T}_w)\setminus \{w\}$. We now describe when $R[a_i,b_i]$ and $R[a_j,b_j]$ can intersect for some $i\neq j$. Every intersection point comes from a virtual edge $uu'$ with $u\in \mathcal{T}_i$ and $u'\in \mathcal{T}_j$, and in particular each root appears in at most two of the $\mathcal{T}_i$. 
        We show that when $r_x\in R[a_i,b_i]\cap R[a_j,b_j]$, we must have $x\in \{a_i,b_i\}$. Indeed, if not, we select $r_c\not\in R[a_i,b_i]$ and now $r_c,r_{a_i},r_x,r_{b_i}$ appear consecutively on the cycle, with a path between $r_{a_i}$ and $r_{b_i}$ contained within $V(\mathcal{T}_i)$ and a path between $r_x$ and $r_{c}$ outside of $V(\mathcal{T}_i)$ (via $u'$), yielding a contradiction via an $\{r_c,r_{a_i},r_x,r_{b_i}\}$-rooted $K_4$-minor by  \Cref{lemma:cycle-to-k4}.

        Moreover, we already argued that $R(\mathcal{S}_w)\cap R[a_i,b_i]$ can only intersect in $a_i$ or $b_i$ (and in fact, at most one of those two by minimality arguments). Combined, this means that there is a way to renumber such that
        \[
        \bigcup_{i=1}^j R[a_i,b_i]\cup R(\mathcal{S}_{w}) \text{ and }\bigcup_{i=j+1}^d R[a_i,b_i]\cup R(\mathcal{S}_{w}) 
        \]
        form intervals intersecting only in their endpoints. Here $j$ is chosen so that either $w=r_{j}$ if $w$ is terminal, and $w$ is between $r_{b_j}$ and $r_{a_{j+1}}$ on the cycle if one (or both) of those is a virtual edge incident to $w$, and otherwise $R(\mathcal{S}_w)=\emptyset$ and does not affect whether the sets are intervals.
     \end{proof}

    Our dynamic program will ensure that each partial interval solution in $\mathcal{B}$ is considered by following rules \textcolor{gray}{\textbf{R1}}-\textcolor{gray}{\textbf{R3}}.
    Since these rules are also easily seen to result in a valid Steiner tree, the final output therefore will be optimal by Claim~\ref{clm:family}.

    \proofsubparagraph*{Base case.} 
    All entries are set to $\infty$ initially.
    We compute all entries $\mathsf{DP}[a,b,v,s_a,s_b]$ corresponding to a partial interval solution consisting of a single vertex via \Cref{obs:virtual-edge}. That is, for each vertex $v$, 
    \begin{itemize}
        \item if $v$ is a terminal $r_a$, we set $\mathsf{DP}[a,a,v,s_a,s_a']=0$;
        \item if $v$ has exactly one incident virtual edge $r_a$, we set $\mathsf{DP}[a,a,v,v,v]=w_{r_a}(v)$.  
        \item if $v$ has incident virtual edges $r_a$ and $r_{a+1}$, we set $\mathsf{DP}[a,a+1,v,v,v]=w_{r_a}(v)+w_{r_{a+1}}(v)$. 
    \end{itemize} 

    \proofsubparagraph*{Adding a vertex.}
    We first implement \textcolor{gray}{\textbf{R2}}: adding a vertex not incident to any roots. Suppose that $v$ is not incident with any roots.
    For any edge $uv\in E$, we add the rule
    \[
    \mathsf{DP}[a, b, v,s_a,s_b] \leftarrow \min(\mathsf{DP}[a, b, u ,s_a,s_b]+w_{uv},\mathsf{DP}[a, b, v,s_a,s_b]).
    \]
     Next, we implement \textcolor{gray}{\textbf{R3}}, `gluing on $v$', which has multiple cases because we also need to properly keep track of the costs of the virtual edges. 
\proofsubparagraph*{Gluing two solutions without intersection.}
   We first consider the case in which the intervals of the roots are disjoint, in which case the partial interval solutions that we are gluing do not have any virtual edges between them. 
    When $a_2=b_1+1$ and $a_1\not\in [a_2,b_2]$, we set
        \[
        \mathsf{DP}[a_1, b_2, v,s_{a_1},s_{b_2}]
        \leftarrow \min( \mathsf{DP}[a_1,b_2,v,s_{a_1},s_{b_2}],\mathsf{DP}[a_1, b_2, v, s_{a_1}, s_{b_2}]+\mathsf{DP}[a_2, b_2, v,s_{a_2},s_{b_2}]).
        \]
        
    \proofsubparagraph*{Gluing two solutions with one intersection.}
    Now suppose that the intervals overlap in exactly one point. 

    When $a_2=b_1$ with $r_{a_2}$ a terminal and $a_1\not\in [a_2+1,b_2]$, we set 
\[
\mathsf{DP}[a_1, b_2, v,s_{a_1},s_{b_2}] \leftarrow \min(\mathsf{DP}[a_1, b_2, v,s_{a_1},s_{b_2}], \mathsf{DP}[a_1, b_1, v,s_{a_1},s_{b_1}]+\mathsf{DP}[a_2, b_2, v,s_{a_2},s_{b_2}]).
\]
We may forget about $s_{a_2},s_{b_1}$ since the `status' of the terminal is irrelevant to us.

When $r_{a_2}=uu'$ is a virtual edge,  we need to consider various options, depending on the status this edge used to be in for both solutions, and what final state we want it to be, and whether $a_1=b_1$ and/or $a_2=b_2$. 

We start with when we want to make the status $\mathsf{d}$. We need to ensure $u$, $u'$ are part of the solution, and the solutions must overlap elsewhere in some vertex $v$. So when
$a_2=b_1$, 
$a_1\not\in [a_2+1,b_2]$, we set
    \begin{align*}
        \mathsf{DP}[a_1,b_2, v,x_1,x_2] \leftarrow \min(&\mathsf{DP}[a_1, b_2, v,x_1,x_2], \\
        &\mathsf{DP}[a_1,b_1, v,s_{a_1},u]+\mathsf{DP}[a_2, b_2, v,u',s_{b_2}]-w_{uu'}(u)-w_{uu'}(u')+w_{uu'}(\mathsf{d})), \\
        &\mathsf{DP}[a_1, b_1, v,s_{a_1},u']+\mathsf{DP}[a_2, b_2, v,u,s_{b_2}]-w_{uu'}(u)-w_{uu'}(u')+w_{uu'}(\mathsf{d})),\\
        \intertext{where we require $x_1=s_{a_1}$ if $a_1\neq b_1$ and $x=\mathsf{d}$  when $a_1=b_1$; and  we require $x_2=s_{b_2}$ if $a_2\neq b_2$ and $x_2=\mathsf{d}$ when $a_2=b_2$.   
        In the last line, for example, we use that the first partial interval solution has paid the cost $w_{uu'}(u')$ (as recorded by the status) and the second $w_{uu'}(u)$. We allow here to replace the cost by $w_{uu'}(\mathsf{d})$, since the solutions can be merged via the vertex $v$. If $a_1=b_1$ or $a_2=b_2$, we will keep the status $\mathsf{d}$ recorded at the relevant endpoint.}
\intertext{We also give the option to `connect' via the virtual edge $r_{a_2}$. When $r_{a_2}=uu'$ is a virtual edge,  $a_2=b_1$ 
$a_1\not\in [a_2+1,b_2]$, we set}
 \mathsf{DP}[a_1,b_2, u,x_1,x_2] \leftarrow \min(&\mathsf{DP}[a_1, b_2, u,x_1,x_2], \\
        &\mathsf{DP}[a_1, b_1, u,s_{a_1}, u]+\mathsf{DP}[a_2, b_2, u', u',s_{b_2}] - w_{uu'}(u)-w_{uu'}(u')+w_{uu'}(\mathsf{c})),\\
 \mathsf{DP}[a_1,b_2, u',x_1,x_2] \leftarrow \min(&\mathsf{DP}[a_1, b_2, u',x_1,x_2], \\
        &\mathsf{DP}[a_1, b_1, u,s_{a_1}, u]+\mathsf{DP}[a_2, b_2, u', u',s_{b_2}] - w_{uu'}(u)-w_{uu'}(u')+w_{uu'}(\mathsf{c})),
        \intertext{where again, we require $x_i=s_{a_i}$ if $a_i\neq b_i$ and $x_i=\mathsf{c}$ otherwise.}
        \intertext{To allow the edge to be in status $u$ (or $u'$, analogously), we also add}
         \mathsf{DP}[a_1,b_2, v,x_1,x_2] \leftarrow \min(&\mathsf{DP}[a_1, b_2, v,x_1,x_2], \\
        &\mathsf{DP}[a_1, b_1, v,s_{a_1}, u']+\mathsf{DP}[a_2, b_2, v, u,s_{b_2}] - w_{uu'}(u'),\\
        &\mathsf{DP}[a_1, b_1, v,s_{a_1}, u]+\mathsf{DP}[a_2, b_2, v, u,s_{b_2}] - w_{uu'}(u),\\
         &\mathsf{DP}[a_1, b_1, v,s_{a_1}, u]+\mathsf{DP}[a_2, b_2, v, u',s_{b_2}] - w_{uu'}(u')),
        \intertext{where again, we require $x_i=s_{a_i}$ if $a_i\neq b_i$ and $x_i=\mathsf{c}$ otherwise.}
    \end{align*}
    \proofsubparagraph{Gluing two solutions with two intersections}
    It remains now to glue intervals with two intersection points to a final solution. The rules for this are analogous to the previous case, but now the compatibility is checked at both endpoints, $a_1=b_2$ and $a_2=b_1$.
 
\medskip

    We apply the rules in a bottom-up fashion. The final output is the minimum over all entries $\mathsf{DP}[a,b,v,s_a,s_b]$ with $R[a,b] = R$. 

    \proofsubparagraph*{Running time.}
    There are $O(n^3)$ entries, each of which takes $O(n)$ time to compute.
    Thus, the total running time is $O(n^4)$.
\end{proof}

\subsection{Preprocessing} \label{ssec:preprocess}
Throughout, the algorithm works on a simple graph $G^*$ (that is, at most one edge per pair of vertices). We can always obtain this via the following \emph{edge pruning} steps.
\begin{enumerate}
    \item If there are two edges $e$ and $e'$ containing the same two endpoints with $w_e \le w_{e'}$, then delete $e'$.
    \item If there are two virtual edges $e_1^*$ and $e_2^*$ over the same endpoints $u$ and $v$, then delete $e_1^*$ and $e_2^*$ add a new virtual edge $e^*$ between $u$ and $v$ with the virtual weight defined as follows:
    \begin{itemize}
    \item $w_{e^*}(u) = w_{e_1^*}(u) + w_{e_2^*}(u)$. 
    \item $w_{e^*}(v) = w_{e_1^*}(v) + w_{e_2^*}(v)$. 
    \item $w_{e^*}(\mathsf{c}) = \min(w_{e_1^*}(\mathsf{c}) + w_{e_2^*}(\mathsf{d}), w_{e_1^*}(\mathsf{d}) + w_{e_2^*}(\mathsf{c}))$. 
    \item $w_{e^*}(\mathsf{d}) = w_{e_1^*}(\mathsf{d}) + w_{e_2^*}(\mathsf{d})$.
    \end{itemize}
    \item If there is an edge $e$ and virtual edge $e^*$ over the same endpoints say $u$ and $v$, then delete~$e$, and update the virtual weight of $e^*$ by $w_{e^*}(\mathsf{c}) = \min(w_{e^*}(\mathsf{c}), w_{e^*}(\mathsf{d}) + w_e)$.
\end{enumerate}

We briefly discuss the correctness of the preprocessing steps above, though all of them follow directly from the definition of \textsc{Virtual Edge Steiner Tree}.
Firstly, note that we exactly remove all multiple edges (and preserve which vertices have at least one edge between them). For (1), an edge with a larger weight is never included in an optimal solution so can be removed. For (2), at most one of the two virtual edges will be used to `connect' the vertices, but we still need to pay the cost for both even if they are not `included' in the solution.
Hence the new edge has the weight defined as the sum of the weight of two virtual edges for the cases $u$, $v$, and $\mathsf{d}$.
For the case $\mathsf{c}$, we may assume that exactly one of $e_1^*$ and $e_2^*$ is included into the solution. 
Lastly, for (3), we update $w_{e^*}$ to take into account that it may be cheaper to connect via the edge $e$ in the scenario that $u$ and $v$ needs to be connected.
That is, if there is an edge $e$ and virtual edge $e^*$ such that $w_{e^*}(\mathsf{c}) \geq  w_{e^*}(\mathsf{d}) + w_e$, then we update $w_{e^*}(\mathsf{c})$ to $w_{e^*}(\mathsf{d}) + w_e$. In the other scenarios an optimal solution would never add the edge $e$.

\smallskip

We define a preprocessing procedure as follows to ensure that every biconnected component has at least one root and that every triconnected component has at least two roots.

We say a vertex set $A\subseteq V$ is \textit{incident with a root} if $A\cap T\neq \emptyset$ or there is a virtual edge incident with a vertex of $A$.
\begin{enumerate}
    \item Perform edge pruning.
    \item If there is a cut vertex $v$ such that $G - v$ has a connected component $A$ that is not incident with any roots, then delete $A$ from the graph. Return to 1.
    \item If there is a 2-cut $\{ u, v \}$ such that $G[V\setminus \{u,v\}]$ has a connected component $A$ that is not incident with a root, then delete $A$ from the graph and add an edge $uv$, where the weight $w_{uv}$ equals $\dist_{G[A \cup \{ u, v \}]}(u, v)$. Return to 1.
    % \carla{I added step 1 and the return to this after each step; I merged the last two steps into the step below and took care of the boundary case in which the cut has a virtual edge, when we need to be careful to remove it. I also ensured the write-up now specified how the virtual edge relates to $A$ (in particular when it is not in $A$ but incident with it).}
    \item If there is a 2-cut $\{ u, v \}$ such that $G[V\setminus \{u, v\}]$ has a connected component $A$ that is incident with exactly one root, we compute the weights for a new virtual edge $\{u,v\}$ by solving four \textsc{Virtual Edge Steiner Tree} instances using Observation \ref{obs:virtual-edge}.
    \begin{itemize}
        \item $w_{uv}(u)$ is the cost of the instance induced by $A \cup \{ u \}$ with $u$ as an additional terminal.
        \item $w_{uv}(v)$ is the cost of the instance induced by $A \cup \{v \}$ with $v$ as an additional terminal.
        \item $w_{uv}(\mathsf{c})$ is the cost of the instance induced by $A \cup \{ u, v \}$ with $\{ u, v \}$ as terminals but where we do not include a virtual edge between $u$ and $v$ if it had one.
        \item $w_{uv}(\mathsf{d})=\min(w_{uv}(u),w_{uv}(v))$.
    \end{itemize}
     We add a new virtual edge $\{ u, v \}$ with the costs as defined above and remove all vertices from $A$.
(If there was already a virtual edge between $u$ and $v$, we keep it and it will be dealt with during the cleaning.) Return to 1.
\end{enumerate}

For (2), note that since $v$ is a cut vertex, there is always an optimal solution that has empty intersection with  $A$.

For (3), if $u$ and $v$ are connected in an optimal solution via $A$, then this will be done via a shortest path between $u$ and $v$.

For (4), there are four cases to consider. In the first three cases, we enforce $u$, $v$ or both $u$ and $v$ are contained in the optimal Steiner tree by making them terminals. In the last, we note that the solution on $A$ is allowed to go either via $u$ or via $v$. Note that the instances that we solve to define the costs all have at most 3 roots (virtual edges or terminals) and so can be solved in polynomial time in terms of $|A|$ using Observation \ref{obs:virtual-edge}. 

\subsection{Our algorithm} \label{ssec:main-algorithm}

\begin{algorithm}[t]
	\caption{A polynomial-time algorithm for \textsc{Virtual Edge Steiner Tree}. We assume that the input graph is connected and that there is no rooted $K_4$-minor in the instance.}
	\label{main-alorithm}  \label{algo:main}
	\begin{algorithmic}[1]	

	\Procedure{Algo}{$(G,T,E^*,w_e)$}
    \State Apply preprocessing from \Cref{ssec:preprocess}
	\State \algorithmicif\ $|T| + |E^*| = O(1)$ \algorithmicthen\ \Return Solve by \Cref{obs:virtual-edge} and Dreyfus-Wagner.
     \If {$G^*$ is not 2-connected}
      \State Let $v$ be a cut vertex that separates $G$ into $A$ and $B$.
      \State \Return  \textsc{Algo}($G[A \cup \{ v \}]$, $(T \cap A) \cup \{ v \}$, $E^* \cap \binom{A}{2}$, $w$) \\
      \hspace{8em} + \textsc{Algo}($G[B \cup \{ v \}]$, $(T \cap B) \cup \{ v \}$, $E^* \cap \binom{B}{2}$, $w$).
    \ElsIf {$G^*$ is not 3-connected}
        \State Let $\{ u, v \}$ be a cut that separates $G$ into $A$ and $B$ with $|A| \le |B|$.
            \State Let $e$ be a virtual edge connecting $u$ and $v$.
            \State $w_{e}(u) \leftarrow$ \textsc{Algo}($G[A \cup \{ u \}], (T \cap A) \cup \{ u \}, E^* \cap \binom{A}{2}, w$)
            \State $w_{e}(v) \leftarrow$ \textsc{Algo}($G[A \cup \{ v \}], (T \cap A) \cup \{ v \}, E^* \cap \binom{A}{2}, w$)
            \State $w_{e}(\mathsf{c}) \leftarrow$ \textsc{Algo}($G[A \cup \{ u, v \}], (T \cap A) \cup \{ u, v \}, E^* \cap \binom{A}{2}, w$)
            \State $w_{e}(\mathsf{d}) \leftarrow$ \textsc{Algo}($G[A \cup \{ u, v \}] + uv, (T \cap A) \cup \{ u, v \}, E^* \cap \binom{A}{2}, w$)]
        \State \Return \textsc{Algo}($G[B \cup \{ u, v \}], T, E^* \cup \{ e \}, w$)
    \Else
        \State \Return the result for 3-connected case.
    \EndIf
	\EndProcedure
	\end{algorithmic}
\end{algorithm}

See \Cref{algo:main} for the outline of the algorithm.
Our algorithm maintains the invariant that there is no rooted $K_4$-minor over the terminals $T$ and virtual edges $E^*$.
We first apply the preprocessing steps in \Cref{ssec:preprocess}.
If this results in a graph with $O(1)$ roots, we solve the problem by reducing to \textsc{Steiner Tree} with $O(1)$ terminals, which can then be solved using the Dreyfus-Wagner algorithm~\cite{DBLP:journals/networks/DreyfusW71} in $O(1)$ time.

\subparagraph*{$G^*$ is not 2-connected.}

Let $v$ be a cut vertex separating $G$ into $A$ and $B$.
(Here, $A$ and $B$ each may contain multiple connected components of $G - v$.)
We recursively solve two instances induced on vertex sets $A \cup \{ v \}$ and $B \cup \{ v \}$ where $v$ is an additional terminal. 
To see why this recursion is correct, note that $A$ and $B$ each contains at least one terminal or virtual edge by the preprocessing steps.
So, any Steiner tree must cover $v$ as well, and thus
it is obtained by ``gluing'' the two solutions on the vertex $v$.
The total cost equals the sum of the two recursive instance costs.

We claim that no rooted $K_4$-minor is introduced in the recursive calls.
As $v$ is a cut vertex, there is a path from $v$ to each vertex in $B$. Moreover, by the preprocessing, $B$ contains a root (there is either a virtual edge in $B\cup \{v\}$ or a terminal in $B$). This means that any rooted $K_4$-minor in the instance on $A\cup \{v\}$ with $v$ as terminal also gives rise to a rooted $K_4$-minor in $G$. 

\subparagraph*{$G^*$ is not 3-connected.} 

We first find a 2-cut $\{ u, v \}$ that separates the graph into $A$ and~$B$ with $|A|\leq |B|$.
Our algorithm considers four cases based on whether $u$ and $v$ are covered by the solution.
    \begin{itemize}
        \item We solve the subproblem over $G[A \cup \{ u, v \}]$ with $u$ as an additional terminal, to account for the case in which $u$ is covered but $v$ is not. 
        \item We solve the subproblem over $G[A \cup \{ u, v \}]$ with $v$ as an additional terminal, to account for the case in which $v$ is covered but $u$ is not. 
        \item We solve the subproblem over $G[A \cup \{ u, v \}]$ with $u$ and $v$ as additional terminals. This accounts for the case that both $u$ and $v$ are covered, and they are connected through $A$.
        \item We solve the subproblem over $G[A \cup \{ u, v \}]$ with $u$ as terminal and the edge $\{u,v\}$ added (in $E$) with cost $w(\{u,v\})=0$. This accounts for the case in which both $u$ and $v$ are covered, and they are not connected through $A$. 
    \end{itemize}
    Next, we solve a single instance on vertex set $B \cup \{u,v\}$ where $\{u,v\}$ is added as virtual edge (in $E^*$).
    The cost of the virtual edge $\{ u, v \}$ is determined by the previous calculations 

    We verify that no rooted $K_4$-minor is introduced in the recursive calls.
    The first two recursive calls (corresponding to the cases $u$ and $v$) are analogous to the scenario where $G^*$ has a cut vertex.
    For the third and fourth calls (cases $\mathsf{c}$ and $\mathsf{d}$), assume for contradiction that there exists a rooted $K_4$-minor over $T \cup \{u, v\}$ in $G[A \cup \{u, v\}] + uv$.
    Our preprocessing steps ensure two roots $r, r' \in T \cup E^*$. Since there is no cut vertex, there are two disjoint paths connecting $\{r, r'\}$ to $\{u, v\}$.
    Using these paths, we can construct a rooted $K_4$-minor in $G$, which is a contradiction. For the final recursive call on $B \cup \{u, v\}$, the argument for the case where $G^*$ has a cut vertex again shows the non-existence of a rooted $K_4$-minor.

    \subparagraph*{$G^*$ is 3-connected.}
    If $G^*$ is 3-connected, then we can find a polynomial time a cycle going through all roots by \Cref{lem:comb_lemma} presented in Section~\ref{sec:comb_lemma},
    and \textsc{Virtual Edge Steiner Tree} can be solved as described in \Cref{ssec:dynamic-programming}

\paragraph*{Runtime analysis.}

When we solve an instance with $G^*$ being $3$-connected, we already showed the running time is at most $c'n^4$ for some constant $c'>0$, assuming $n \geq n_0$ for some constant $n_0$. 
After making $c'$ larger if needed, we can also assume all additional steps in the algorithm such as finding a 2-cut and preprocessing take at most $c'n^3$ on inputs of $n \geq n_0$ vertices.

We let $T(n)$ denote the maximal running time on an input graph on $n$ vertices and show that $T(n)\leq cn^4$ for some constant $c>0$ by induction on $n$. We will choose $c$ such that $c\geq c'$ and $T(n)\leq cn^4$ when $n\leq n_0$.

When $G^*$ is not 2-connected, we find a cut vertex $v$ splitting the graph into $A$ and $B$ and recursively solve two instances on $A\cup \{v\}$ and $B\cup \{v\}$ and obtain the cost from there. So with $|A|=i$, we find the running time is at most $T(i+1)+T(n-i)+c'n^3$.

When $G^*$ is 2-connected, but not 3-connected, we find a 2-cut $\{u,v\}$ that separates $G^*$ into $A$ and $B$ with $|A|\leq |B|$, and therefore $|A|\leq (n-2)/2$. With $i=|A|$, we recursively solve four instances of size at most $i$ and one instance of size $|B|\leq n-i$, resulting in a running time of at most
\[
\max_{1\leq i\leq n/2} T(n - i) + 4T(i + 2)+c'n^3.
\]
Applying the inductive hypothesis, we find $T(n-i)\leq c(n-i)^4$ and $T(i+2)\leq c(i+2)^4$ for all $1\leq i\leq n/2$. 
The function $f(i) = c (n - i)^{4} + 4c(i + 2)^4 +c'n^{3}$ is convex within the domain $1 \le i \le n/2$ because the second derivative
$f''(i)$ (with respect to $i$)
is positive. Therefore, the maximum value is attained either at $i = 1$ or $i = n/2$. Evaluating these points, 
\begin{align*}
    f(1) &= c(n-1)^4 + 4c(3^4)+ c'n^3 \\
    % = cn^4-4cn^3+ 6cn^2- 4cn+ c+4c(3^4)+c'n^3, so need 4c>c'
    f(n/2) &= c(n/2)^4+4c(n/2+2)^4 + c'n^3,
\end{align*}
we see both are less than $cn^4$ when $n$ is sufficiently large since $c\geq  c'$.
Hence, we conclude that $T(n) = O(n^4)$.

\bibliographystyle{abbrv}
\bibliography{refs}

\end{document}